\newtheorem{theorem}{Theorem}
\newtheorem{proposition}{Proposition}
\newtheorem{lemma}{Lemma}
\newtheorem{corollary}{Corollary}
\newtheorem{definition}{Definition}
\def \endprf{\hfill {\vrule height6pt width6pt depth0pt}\medskip}
\newenvironment{proof}{\noindent {\bf Proof} }{\endprf\par}
\numberwithin{equation}{section}
\newcommand{\argmin}{\operatorname{argmin}}
\newcommand{\Pa}{\mathbb{P}}
\newcommand{\code}{\mathcal{C}}
\newcommand{\Ne}{\mathcal{N}}
\newcommand{\PP}{\mathbb{PP}} 
\newcommand{\Proj}{\Pi}
\newcommand{\Lag}{\mathcal{L}}
\newcommand{\eq}[1]{(\ref{eq:#1})}
\newcommand{\mbf}{\boldsymbol}
\newcommand{\main}{\text{\textit{\textbf{m}}}_{\rightarrow j}}
\newcommand{\maout}{\text{\textit{\textbf{m}}}_{j \rightarrow}}
\newcommand{\mia}{m_{i\rightarrow j}}
\newcommand{\mai}{m_{j\rightarrow i}}
\newcommand{\lai}{\lambda_{j,i}}
\newcommand{\bfH}{\mbf{H}}
\newcommand{\bfP}{\mbf{P}}
\newcommand{\bfS}{\mbf{S}}
\newcommand{\bfQ}{\mbf{Q}}
\newcommand{\bfzero}{\mathbf{0}}
\newcommand{\bfnu}{\mbf{\nu}}
\newcommand{\bfgamma}{\mbf{\gamma}}
\newcommand{\bflambda}{\mbf{\lambda}}
\newcommand{\bftheta}{\mbf{\theta}}
\newcommand{\bfeta}{\mbf{\eta}}
\newcommand{\bfa}{\mbf{a}}
\newcommand{\bfb}{\mbf{b}}
\newcommand{\bfe}{\mbf{e}}
\newcommand{\bff}{\mbf{f}}
\newcommand{\bft}{\mbf{t}}
\newcommand{\bfu}{\mbf{u}}
\newcommand{\bfv}{\mbf{v}}
\newcommand{\bfw}{\mbf{w}}
\newcommand{\bfx}{\mbf{x}}
\newcommand{\bfxTil}{\tilde{\mbf{x}}}
\newcommand{\bfX}{\mbf{X}}
\newcommand{\bfXTil}{\tilde{\mbf{X}}}
\newcommand{\bfy}{\mbf{y}}
\newcommand{\bfz}{\mbf{z}}
\newcommand{\bfzTil}{\tilde{\mbf{z}}}
\newcommand{\bfzHat}{\hat{\mbf{z}}}
\newcommand{\vct}[1]{\mathbf{#1}}
\newcommand{\laggamma}{\bfeta}
\newcommand{\lagmu}{\bfnu}
\newcommand{\lagxi}{\xi}
\newcommand{\lagtheta}{\bftheta}
\newcommand{\lagzeta}{\zeta}
\newcommand{\overrel}{\rho}
\newcommand{\mcB}{\mathcal{B}}
\newcommand{\mcL}{\mathcal{L}}
\newcommand{\mcE}{\mathcal{E}}
\begin{document}

\title{Decomposition Methods for Large Scale LP Decoding} 

\author{Siddharth Barman
\thanks{S.~Barman is with the Center for the Mathematics of Information, California Inst. of Tech., CA 91125 (barman@caltech.edu).}
, Xishuo Liu
\thanks{X.~Liu is with the Dept.~of Electrical and Computer
  Engineering, University of Wisconsin, Madison, WI 53706
  (xliu94@wisc.edu).}  
, Stark C. Draper
\thanks{S.~C.~Draper is with the Dept.~of Electrical and Computer
  Engineering, University of Wisconsin, Madison, WI 53706
  (sdraper@ece.wisc.edu).}
, Benjamin Recht
\thanks{B.~Recht is with the Dept.~of Computer Sciences, University of
  Wisconsin, Madison, WI 53706 (brecht@cs.wisc.edu).}  
}

\maketitle

\begin{abstract}

When binary linear error-correcting codes are used over symmetric
channels, a relaxed version of the maximum likelihood decoding problem
can be stated as a linear program (LP).  This LP decoder can be used
to decode error-correcting codes at bit-error-rates comparable to
state-of-the-art belief propagation (BP) decoders, but with
significantly stronger theoretical guarantees.  However, LP decoding
when implemented with standard LP solvers does not easily scale to the
block lengths of modern error correcting codes.  In this paper we draw
on decomposition methods from optimization theory, specifically the
Alternating Directions Method of Multipliers (ADMM), to develop
efficient distributed algorithms for LP decoding.  

The key enabling technical result is a ``two-slice'' characterization
of the geometry of the parity polytope, which is the convex hull of
all codewords of a single parity check code.  This new
characterization simplifies the representation of points in the
polytope.  Using this simplification, we develop an efficient
algorithm for Euclidean norm projection onto the parity polytope.  This
projection is required by ADMM and allows us to use LP decoding, with
all its theoretical guarantees, to decode large-scale error correcting
codes efficiently.

We present numerical results for LDPC codes of lengths more than
$1000$.  The waterfall region of LP decoding is seen to initiate at a
slightly higher signal-to-noise ratio than for sum-product BP, however
an error floor is not observed for LP decoding, which is not the case
for BP.  Our implementation of LP decoding using ADMM executes as fast
as our baseline sum-product BP decoder, is fully parallelizable, and
can be seen to implement a type of message-passing with a particularly
simple schedule.
\end{abstract}

\section{Introduction}

While the problem of error correction decoding dates back at least to
Richard Hamming's seminal work in the 1940s~\cite{hamming:bstj50}, the
idea of drawing upon techniques of convex optimization to solve such
problems apparently dates only to Jon Feldman's 2003
Ph.D. thesis~\cite{feldmanThesis, feldmanEtAl:IT05}.  Feldman and his
collaborators showed that, for binary codes used over symmetric
channels, a relaxed version of the maximum likelihood (ML) decoding
problem can be stated as a linear program (LP).  Considering
graph-based low-density parity-check (LDPC) codes, work by Feldman
\emph{et al.} and later authors~\cite{vontobelKoetter:ISITA04}
\cite{vontobelKoetter:ETT06} \cite{taghaviSiegel:IT08}
\cite{wangYedidiaDraper:ISIT09} demonstrates that the bit-error-rate
performance of LP decoding is competitive with that of standard
sum-product (and min-sum) belief propagation (BP) decoding.
Furthermore, LP decoding comes with a certificate of correctness (ML
certificate)\cite{feldmanEtAl:IT05} -- verifying with probability one
when the decoder has found the ML codeword.  And, if a high-quality
expander~\cite{feldmanEtAL:ISIT04, daskalakisEtAl:IT08} or
high-girth~\cite{aroraEtAl:STOC09} code is used, LP decoding is
guaranteed to correct a constant number of bit flips.

A barrier to the adoption of LP decoding is that solving Feldman's
relaxation using generic LP algorithms is not computationally
competitive with BP.  This is because standard LP solvers do not
automatically exploit the rich structure inherent to the linear
program.  Furthermore, unlike BP, standard solvers do not have a distributed
nature, limiting their scalability via parallelized
(and hardware-compatible) implementation.  In this paper we draw upon
large-scale decomposition methods from convex optimization to develop
an efficient, scalable algorithm for LP decoding.  The result is a
suite of new techniques for efficient error correction of modern
graph-based codes, and insight into the elegant geometry of a
fundamental convex object of error-correction, the \emph{parity
polytope}.

A real-world motivation for developing efficient LP decoding
algorithms comes from applications that have extreme reliability
requirements.  While suitably designed LDPC codes decoded using BP can
achieve near-Shannon performance in the ``waterfall'' regime where the
signal-to-noise ratio (SNR) is close to the code's threshold, they
often suffer from an ``error floor'' in the high SNR regime.  This
limits the use of LDPCs in applications such as magnetic recording and
fiber-optic transport networks.  Error floors result from
weaknesses of arrangements in the graphical structure of the code (variously termed
``pseudocodewords,'' ``near-codewords,'' ``trapping sets,''
``instantons,'' ``absorbing sets'' \cite{FreyKoetterVardy:IT01}
\cite{koetterVontobel:turbo03} \cite{mackayPostol:03}
\cite{richardson:allerton03} \cite{dolecekEtAl:jsac09}), from the
sub-optimal BP decoding algorithm, and from the particulars of the
implementation of BP.  Two natural approaches to improving error floor
performance are to design codes with fewer problematic arrangements
\cite{huEtAl:IT05} \cite{tianEtAl:icc03}
\cite{wangDraperYedidia:preprint11} \cite{zhangEtAl:07}
\cite{tanner:IT81} \cite{wangFossorier:ISIT06}.  or to develop
improved decoding algorithms. 
As LP decoders have been observed to have lower error rate at high SNRs compared to BP decoding~\cite{wangYedidiaDraper:ISIT09,yedidiaWangDraper:IT11,burshtein:IT09,burshtein:IT11}, the
approach taken herein is the latter.

A second motivation is that an efficient LP decoder can help to
develop closer and closer approximations of ML decoders.  This is due
to the strong theoretical guarantees associated with LP solvers.  When
the optimum vertex identified by an LP decoder is integer, the ML
certificate property ensures that that vertex corresponds to the ML
codeword.  When the optimum vertex is non-integer (a
``pseudocodeword''), one is motivated to tighten the relaxation to
eliminate the problematic pseudocodeword, and try again.  Various
methods for tightening LP relaxations have been
proposed~\cite{yangFeldman:JSAC06} \cite{draperYedidiaWang:ISIT07}
\cite{taghaviSiegel:IT08}.  In some settings one can regularly attain
ML performance with few additional
constraints~\cite{yedidiaWangDraper:IT11}.

In this paper, we produce a fast decomposition algorithm based on the
{\em Alternating Direction Method of
  Multipliers}~\cite{boydEtAl:FnT10} (ADMM).  This is a classic
technique in convex optimization and has gained a good deal of
popularity lately for solving problems in compressed
sensing~\cite{Afonso11} and MAP inference in graphical
models~\cite{Martins11}.  As we describe below, when we apply the ADMM
algorithm to LP decoding, the algorithm is a message passing algorithm
that bears a striking resemblance to belief propagation.  Variable
nodes update their estimates of their true values based on information
(messages) from parity check and measurement nodes.  The parity check
nodes produce estimated assignments of local variables based on
information from the variable nodes.

To an optimization researcher, our application of ADMM would appear
quite straight forward.  However, our second contribution, beyond a
naive implementation of ADMM, is a very efficient computation of the
estimates at the parity checks.  Each check update requires the
computation of a Euclidean projection onto the aforementioned parity
polytope.  In Section~\ref{Section:parity_poly}, we demonstrate that
this projection can be computed in linearithmic time in the degree of
the check.  This in turn enables us to develop LP decoders with
computational complexity comparable to (and sometimes much faster
than) BP decoders.

The structure of the decoding LP has been examined before in pursuit
of efficient implementation.  The first attempt was by Vontobel and
Koetter~\cite{vontobelKoetter:turbo06,vontobelKoetter:ETT06} where the
authors used a coordinate-ascent method to develop distributed
message-passing type algorithms to solve the LP.  Their method
requires scheduling updates cyclically on all edges in order to
guarantee convergence.  However, when their approach is matched with
an appropriate message-passing schedule, as determined by Burshtein
in~\cite{burshtein:Turbo08,burshtein:IT09}, converge to the optimal
solution can be attained with a computational complexity which scales
linearly with the block length.  Further,
interior-point~\cite{vontobel:ITA08} \cite{wadayama:ISIT08}
\cite{wadayama:ISIT09} \cite{taghaviEtAl:IT11} and
revised-simplex~\cite{liuQuLiuChen:TComm10} approaches have also been
applied.  In a separate approach Yedida \emph{et al.}
in~\cite{yedidiaWangDraper:IT11} introduced ``Difference-Map BP''
decoding which is a simple distributed algorithm that seems to recover
the performance of LP decoding, but does not have convergence
guarantees.

In this paper we frame the LP decoding problem in the template of
ADMM.  ADMM is distributed, has strong convergence guarantees, simple
scheduling, and, in general, has been observed to be more robust than
coordinate ascent. In addition, we do not have to update parameters between iterations in ADMM.  In
Section~\ref{sec.background} we introduce the LP decoding problem and
introduce notation.  We set up the general formulation of ADMM
problems in Section~\ref{sec.ADMMform} and specialize the formulation
to the LP decoding problem. In Section~\ref{Section:parity_poly} we
present our main technical contributions wherein we develop the
efficient projection algorithm.  We present numerical results
in Section~\ref{sec.simulations} and make some final remarks in
Section~\ref{sec.conclusion}.

\section{Background}
\label{sec.background}

In this paper we consider a binary linear LDPC code $\code$ of length
$N$ defined by a $M \times N$ parity-check matrix $\bfH$. Each of the
$M$ parity checks, indexed by $\mathcal{J} = \{1, 2,\ldots, M \}$,
corresponds to a row in the parity check matrix $\bfH$. Codeword
symbols are indexed by the set $\mathcal{I}=\{1,2,\ldots, N \}$. The
neighborhood of a check $j$, denoted by $\Ne_c(j)$, is the set of
indices $i \in \mathcal{I}$ that participate in the $j$th parity
check, i.e., $\Ne_c(j) = \{ i \mid \bfH_{j,i} = 1 \} $. Similarly for
a component $ i \in \mathcal{I}$, $\Ne_v(i) = \{ j \mid \bfH_{j,i}=1
\}$. Given a vector $\bfx \in \{0,1\}^N$, the $j$th parity-check is
said to be satisfied if $\sum_{i \in \Ne_c(j)} x_i $ is even.  In
other words, the set of values assigned to the $x_i$ for $i \in
\Ne_c(j)$ have even parity. We say that a length-$N$ binary vector
$\bfx$ is a codeword, $\bfx \in \code$, if and only if (iff) all
parity checks are satisfied. In a regular LDPC code there is a fixed
constant $d$, such that for all checks $j \in \mathcal{J}$,
$|\Ne_c(j)| = d$. Also for all components $i \in \mathcal{I}$,
$|\Ne_v(i)| $ is a fixed constant. For simplicity of exposition we
focus our discussion on regular LDPC codes.  Our techniques and
results extend immediately to general LDPC codes and to high density parity check codes as well.

To denote compactly the subset of coordinates of $\bfx$ that
participate in the $j$th check we introduce the matrix $\bfP_j$.  The
matrix $\bfP_j$ is the binary $d \times N$ matrix that selects out the
$d$ components of $\bfx$ that participate in the $j$th check. For
example, say the neighborhood of the $j$th check, $\Ne_c(j) = \{i_1,
i_2, \ldots i_d \}$, where $i_1 < i_2 < \ldots < i_d$.  Then, for all
$k \in [d]$ the $(k, i_k)$th entry of $\bfP_j$ is one and the
remaining entries are zero. For any codeword $\bfx \in \code$ and for
any $j$, $\bfP_j \bfx $ is an even parity vector of dimension $d$.  In
other words we say that $\bfP_j \bfx \in \Pa_d$ for all $j \in
\mathcal{J}$ (a ``local codeword'' constraint) where $\Pa_d$ is
defined as
\begin{equation}
\Pa_d = \{ \bfe \in \{0,1\}^d \mid \| \bfe \|_1 \mbox{ is even}\}.
\label{eq.spcConst}
\end{equation}
Thus, $\Pa_d$ is the set of codewords (the codebook) of the length-$d$
single parity-check code.

We begin by describing maximum likelihood (ML) decoding and the LP
relaxation proposed by Feldman \emph{et al}.  Say vector $\bfxTil$ is
received over a discrete memoryless channel described by channel law
(conditional probability) $W : \mathcal{X} \times \tilde{\mathcal{X}}
\rightarrow \mathbb{R}_{\geq 0}$, $\sum_{\tilde{x} \in
  \tilde{\mathcal{X}}} W(\tilde{x} | x) = 1$ for all $x \in
\mathcal{X}$.  Since the development is for binary codes
$|\mathcal{X}| = 2$.  There is no restriction on
$\tilde{\mathcal{X}}$.  Maximum likelihood decoding selects a codeword
$\bfx \in \code$ that maximizes $p_{\bfXTil|\bfX}(\bfxTil | \bfx )$,
the probability that $\bfxTil$ was received given that $\bfx$ was
sent.  For discrete memoryless channel $W$, $p_{\bfXTil|\bfX}(\bfxTil
| \bfx) = \prod_{i \in \mathcal{I}} W(\tilde{x}_i | {x}_i) $.
Equivalently, we select a codeword that maximizes $ \sum_{i \in
  \mathcal{I}} \log W( {\tilde{x}}_i | {x}_i)$. Let $\gamma_i$ be the
negative log-likelihood ratio, $\gamma_i := \log \left[ W(
  {\tilde{x}}_i | 0 ) / W( {\tilde{x}}_i | 1 ) \right]$.  Since $ \log
W( {\tilde{x}}_i | {x}_i) = - \gamma_i {x}_i + \log W( {\tilde{x}}_i |
0) $, ML decoding reduces to determining an $\bfx \in \code $ that
minimizes $\bfgamma^T \bfx = \sum_{i \in \mathcal{I}} \gamma_i
{x}_i$. Thus, ML decoding requires minimizing a linear function over
the set of codewords.\footnote{This derivation applies to all
  binary-input DMCs.  In the simulations of
  Section~\ref{sec.simulations} we focus on the binary-input additive
  white Gaussian noise (AWGN) channel.  To help make the definitions
  more tangible we now summarize how they specialize for the binary
  symmetric channel (BSC) with crossover probability $p$. For the BSC
  $\tilde{x}_i \in \{0,1\}$.  If $\tilde{x}_i = 1$ then $\gamma_i =
  \log [ W( 1 | 0 ) / W( 1 | 1 )] = \log [p/(1-p)] $ and if
  $\tilde{x}_i = 0$ then $\gamma_i = \log [ W( 0| 0 ) / W( 0 | 1 ) ] =
  \log [(1-p)/p]$.}

Feldman \emph{et al.}~\cite{feldmanEtAl:IT05} show that ML decoding is
  equivalent to minimizing a linear cost over the convex hull of all
  codewords.  In other words, minimize $\bfgamma^T \bfx$ subject to
  $\bfx \in {\rm conv}(\code)$.  The feasible region of this program
  is termed the ``codeword'' polytope.  However, this polytope cannot
  be described tractably.  Feldman's approach is first to relax each
  local codeword constraint $\bfP_j \bfx \in \Pa_d$ to $\bfP_j \bfx
  \in \PP_d$ where
\begin{equation}
\PP_d = {\rm conv} ( \Pa_d) = {\rm conv} ( \{ \bfe \in \{0,1\}^d \mid
\| \bfe \|_1 \mbox{ is even}\}). \label{def.parPoly}
\end{equation}
The object $\PP_d$ is called the ``parity polytope''.  It is the
codeword polytope of the single parity-check code (of dimension $d$).
Thus, for any codeword $\bfx \in \code$, $\bfP_j \bfx $ is a vertex of
$\PP_d$ for all $j$.  When the constraints $\bfP_j \bfx \in \PP_d$ are
intersected for all $j \in \mathcal{J}$ the resulting feasible space
is termed the ``fundamental'' polytope.  Putting these ingredients
together yields the LP relaxation that we study:
\begin{align}
\label{eq.feldmanLP}
 \mbox{minimize } & \bfgamma^T \bfx \ \  \mbox{ s.t. }  \ \ \bfP_j \bfx
 \in \PP_d \ \ \forall \ j \in \mathcal{J}.
\end{align}

The statement of the optimization problem in~(\ref{eq.feldmanLP})
makes it apparent that compact representation of the parity polytope
$\PP_d$ is crucial for efficient solution of the LP.  Study of this polytope
 dates back some decades.  In~\cite{jeroslow:DM75}
Jeroslow gives an explicit representation of the parity polytope and
shows that it has an exponential number of vertices and facets in $d$.
Later, in~\cite{yannakakis:JCSS91}, Yannakakis shows that the parity
polytope has \emph{small lift}, meaning that it is the projection of a
polynomially faceted polytope in a dimension polynomial in $d$.
Indeed, Yannakakis' representation requires a quadratic number of
variables and inequalities.  This is one of the descriptions discussed
in~\cite{feldmanEtAl:IT05} to state the LP decoding problem.

Yannakakis' representation of a vector $\bfu \in \PP_d$ consists of
variables $ \mu_s \in [0,1]$ for all even $s \leq d$. Variable $\mu_s$
indicates the contribution of binary (zero/one) vectors of Hamming
weight $s$ to $\bfu$. Since $\bfu$ is a convex combination of
even-weight binary vectors, $\sum_{\textrm{even } s }^d \mu_s = 1$. In
addition, variables $z_{i,s}$ are used to indicate the contribution to
$u_i$, the $i$th coordinate of $\bfu$ made by binary vectors of
Hamming weight $s$.  Overall, the following set of inequalities over
$O(d^2)$ variables characterize the parity polytope
(see~\cite{yannakakis:JCSS91} and~\cite{feldmanEtAl:IT05} for a
proof).
\begin{align*}
& \ 0 \leq u_i  \leq 1 \ \ \ \ \ \ \forall  \ \ \ i \in [d] \\ 
& \ 0  \leq z_{i,s}  \leq \mu_s \ \ \ \forall \ \ \ i \in [d] \\  
& \sum_{\textrm{even } s}^d \mu_s = 1 \\
& \ u_i   = \sum_{\textrm{even } s}^d z_{i,s}  \ \ \ \forall \ \ \ i \in [d] \\
& \sum_{i=1}^d z_{i,s}  = s \mu_s  \ \ \ \forall \ \  s \textrm{  even},  s \leq d. 
\end{align*}

This LP can be solved with standard solvers in polynomial
time. However, the quadratic size of the LP prohibits its solution
with standard solvers in real-time or embedded decoding
applications. In Section~\ref{Section:characterization} we show that
any vector $\bfu \in \PP_d$ can always be expressed as a convex
combination of binary vectors of Hamming weight $r$ and $r+2$ for some
even integer $r$. Based on this observation we develop a new
formulation for the parity polytope that consists of $O(d)$ variables
and constraints. This is a key step towards the development of an
efficient decoding algorithm. Its smaller description complexity also
makes our formulation particularly well suited for high-density codes whose study we leave for future work.

\section{Decoupled relaxation and optimization algorithms}
\label{sec.ADMMform}

In this section we present the ADMM formulation of the LP decoding
problem and summarize our contributions. In Section~\ref{Section:formulation} we introduce the general ADMM
template. We specialize the template to our problem in
Section~\ref{Section:update_steps}.  We state the algorithm in
Section~\ref{Section:admm_decoding} and frame it in the language of
message-passing in Section~\ref{sec.admm_msg_passing}.

\subsection{ADMM formulation}
\label{Section:formulation}

To make the LP~(\ref{eq.feldmanLP}) fit into the ADMM template we
relax $\bfx$ to lie in the hypercube, $\bfx \in [0,1]^N$, and add the
auxiliary ``replica'' variables $\bfz_j \in \mathbb{R}^d$ for all $j
\in \mathcal{J}$.  We work with a decoupled parameterization of the
decoding LP.
\begin{align}
\label{LP:DecodingLP}
\textrm{minimize } &  \ \ \bfgamma^T \bfx \nonumber \\
\textrm{subject to } & \bfP_j \bfx = \bfz_j  \ \ \ \forall \ j \in \mathcal{J}\nonumber\\
 & \bfz_j \in \PP_d \ \ \  \ \forall \ j \in \mathcal{J}  \nonumber \\
 & \bfx \in [0,1]^N.
\end{align}

The alternating direction method of multiplies works with an augmented
Lagrangian which, for this problem, is
\begin{align}
L_\mu(\bfx, \bfz ,\bflambda) := \ & \bfgamma^T \bfx + \sum_{j \in
  \mathcal{J}} \bflambda_j^T ( \bfP_j \bfx - \bfz_j )   +
\frac{\mu}{2} \sum_{j \in \mathcal{J}}\| \bfP_j \bfx - \bfz_j \|_2^2. \label{eq.augLag}
\end{align}
Here $\bflambda_j \in \mathbb{R}^d$ for $j \in \mathcal{J}$ are the
Lagrange multipliers and $\mu >0$ is a fixed penalty parameter. We use
$\bflambda$ and $\bfz$ to succinctly represent the collection of
$\bflambda_j$s and $\bfz_j$s respectively. Note that the augmented
Lagrangian is obtained by adding the two-norm term of the residual to
the ordinary Lagrangian.  The Lagrangian without the augmentation can
be optimized via a dual subgradient ascent
method~\cite{BertsekasConvexBook}, but our experiments with this
approach required far too many message passing iterations for
practical implementation.  The augmented Lagrangian smoothes the dual
problem leading to much faster convergence rates in
practice~\cite{NocedalWrightBook}. For the interested reader, we
provide a discussion of the standard dual ascent method in the
appendix.

Let $\mathcal{X}$ and $\mathcal{Z}$ denote the feasible
regions for variables $\bfx$ and $\bfz$ respectively: $\mathcal{X} =
[0,1]^N$ and we use $\bfz \in \mathcal{Z}$ to mean that $\bfz_1 \times
\bfz_2 \times \ldots \times \bfz_{|\mathcal{J}|} \in \PP_d \times
\PP_d \times \ldots \times \PP_d$, the $|\mathcal{J}|$-fold product of
$\PP_d$.  Then we can succinctly write the iterations of ADMM as
\begin{align*}
\bfx^{k+1} & := \argmin_{\bfx \in \mathcal{X}} L_\mu(\bfx,\bfz^k, \bflambda^k) \\
\bfz^{k+1} & := \argmin_{\bfz \in \mathcal{Z}}  L_\mu(\bfx^{k+1}, \bfz, \bflambda^k ) \\
\bflambda_j^{k+1} & := \bflambda_j^k + \mu  \left( \bfP_j \bfx^{k+1} -\bfz_j^{k+1}  \right).
\end{align*}
The ADMM update steps involve fixing one variable and minimizing the
other. In particular, $\bfx^k$ and $\bfz^k$ are the $k$th iterate and the
updates to the $\bfx$ and $\bfz$ variable are performed in an alternating
fashion.  We use this framework to solve the LP relaxation proposed by
Feldman \emph{et al.} and hence develop a distributed decoding
algorithm.

\subsection{ADMM Update Steps}
\label{Section:update_steps}

The $\bfx$-update corresponds to fixing $\bfz$ and $\bflambda$
(obtained from the previous iteration or initialization) and
minimizing $L_\mu(\bfx,\bfz,\bflambda)$ subject to $ \bfx \in
[0,1]^N$. Taking the gradient of~(\ref{eq.augLag}), setting the result
to zero, and limiting the result to the hypercube $\mathcal{X} =
[0,1]^N$, the $\bfx$-update simplifies to
\begin{align*}
\bfx & = \Proj_{ [0,1]^N } \left( \bfP^{-1} \times \left( \sum_j \bfP_j^T
\left( \bfz_j - \frac{1}{\mu} \bflambda_j \right) - \frac{1}{\mu} \bfgamma
\right) \right),
\end{align*}
where $\bfP = \sum_j \bfP_j^T \bfP_j$ and $\Proj_{ [0,1]^N }(\cdot)$
corresponds to projecting onto the hypercube $[0,1]^N$.  The latter
can easily be accomplished by independently projecting the components
onto $[0,1]$: setting the components that are greater than $1$ equal
to $1$, the components less than $0$ equal to $0$, and leaving the
remaining coordinates unchanged.  Note that for any $j$, $\bfP_j^T
\bfP_j$ is a $N \times N$ diagonal binary matrix with non-zero entries
at $(i,i)$ if and only if $i$ participates in the $j$th parity check
($i \in \Ne_c(j)$). This implies that $\sum_j \bfP_j^T \bfP_j$ is a
diagonal matrix with the $(i,i)$th entry equal to $|\Ne_v(i)|$.  Hence
$ \bfP^{-1} = ( \sum_j \bfP_j^T \bfP_j )^{-1} $ is a diagonal matrix
with $1/| \Ne_v(i) |$ as the $i$th diagonal entry.

Component-wise, the update rule corresponds to taking the average of
the corresponding replica values, $\bfz_j$, adjusted by the the scaled
dual variable, $\bflambda_j/\mu$, and taking a step in the negative
log-likelihood direction. For any $j \in \Ne_v(i)$ let $z_j^{(i)}$
denote the component of $\bfz_j$ that corresponds to the $i$th
component of $\bfx$, in other words the $i$th component of $\bfP_j^T
\bfz_j$. Similarly let $\lambda_j^{(i)}$ be the $i$th component of
$\bfP_j^T \bflambda_j$. With this notation the update rule for the
$i$th component of $\bfx$ is
\begin{align*}
x_i & = \Proj_{[0,1]} \left( \frac{1}{|\Ne_v(i)|} \left( \sum_{j \in
  N_v(i)} \left( z^{(i)}_j - \frac{1}{\mu} \lambda^{(i)}_j \right) -
\frac{1}{\mu} \gamma_i \right) \right).
\end{align*}
Each variable update can be done in parallel.

The $\bfz$-update corresponds to fixing $\bfx$ and $\bflambda$ and
minimizing $L_\mu(\bfx,\bflambda,\bfz)$ subject to $\bfz_j \in \PP_d$
for all $j \in \mathcal{J}$. The relevant observation here is that the
augmented Lagrangian is separable with respect to the $\bfz_j$s and
hence the minimization step can be decomposed (or ``factored'') into
$| \mathcal{J}|$ separate problems, each of which be solved
independently. This decouples the overall problem, making the approach
scalable.

We start from~(\ref{eq.augLag}) and concentrate on the terms that
involve $\bfz_j$.  For each $j \in \mathcal{J}$ the update is to find
the $\bfz_j$ that minimizes
\begin{align*}
 & \frac{\mu}{2} \| \bfP_j \bfx - \bfz_j \|_2^2 - \bflambda_j^T \bfz_j
  \ \ \ \textrm{ s.t.  } \ \ \bfz_j \in \PP_d.
\end{align*}
Since the values of $\bfx$ and $\bflambda$ are fixed, so are $\bfP_j
\bfx$ and $\bflambda_j /\mu$. Setting $\bfv = \bfP_j \bfx +
\bflambda_j /\mu$ and completing the square we get that the desired
update $\bfz_j^{\ast}$ is
\begin{align*}
\bfz_j^{\ast} & = \argmin_{\bfzTil \in \PP_d} \| \bfv -
  \bfzTil \|_2^2.
\end{align*}
The $\bfz$-update thus corresponds to $|\mathcal{J}|$ projections onto
the parity polytope.

\subsection{ADMM Decoding Algorithm}
\label{Section:admm_decoding}

The complete ADMM-based LP decoding algorithm is specified in the
Algorithm~\ref{Algorithm:ADMM} box. We declare convergence when the
following two conditions are satisfied: (i) replicas differ from the
$\bfx$ variables by less than some tolerance $\epsilon > 0$, and (ii)
the value of each replica differs from its value in the previous
iteration by less than $\epsilon$.

A convergence analysis for ADMM is provided
in~\cite{boydEtAl:FnT10}. We base the following analysis
on~\cite[Thm.~1]{wang:12online}.  The output of ADMM decoding,
$\hat{\bfx}$, satisfies the order relation
\begin{equation*}
\bfgamma^T \hat{\bfx} - \bfgamma^T \bfx^* = O\left(\frac{Md \, \mu}{T}\right),
\end{equation*}
where we recall that $\bfgamma$ is the cost vector, $M$ is the number
of checks, $d$ is the check dimension, $\mu$ is the ADMM penalty
parameter; and where $T$ and $\bfx^*$ are, respectively, the number of
iterations and the solution to the LP decoding problem.  Since for
LDPC codes $M = O(N)$,
\begin{equation*}
\bfgamma^T \hat{\bfx} - \bfgamma^T \bfx^* = O\left(\frac{N}{T}\right).
\end{equation*}
This means that for a given $\delta > 0$, ADMM outputs vector
$\hat{\bfx}$ with $O(1)$ iterations, such that $ \bfgamma^T \hat{\bfx}
- \bfgamma^T \bfx^* < N\delta$.  For each iteration, the $\bfx$-update
has $O(N)$ complexity, the $\bfz$-update has $O(M)$ complexity, and
the $\bflambda$-update has $O(M)$ complexity. Combining the above
results, the complexity of ADMM decoding is given by the following
proposition:
\begin{proposition}
Let $\bfx^*$ be a solution of the LP decoding problem. For any $\delta
> 0$, Algorithm~\ref{Algorithm:ADMM} will, in $O(N)$ time, determine a
vector $\hat{\bfx}$ that satisfies the following bound:
\begin{equation*}
 \bfgamma^T \hat{\bfx} - \bfgamma^T \bfx^* < N\delta.
 \end{equation*} 
\end{proposition}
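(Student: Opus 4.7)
The proof will follow exactly the roadmap sketched in the paragraph preceding the proposition statement, so my plan is essentially to assemble the ingredients cleanly and verify that the implicit constants behave as claimed when $d$ is held fixed (the regular-LDPC assumption made in Section~\ref{sec.background}).

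First I would invoke the ADMM convergence bound from \cite[Thm.~1]{wang:12online}, which gives, after $T$ iterations of Algorithm~\ref{Algorithm:ADMM},
\begin{equation*}
\bfgamma^T \hat{\bfx} - \bfgamma^T \bfx^* \;\leq\; \frac{C \, M d \, \mu}{T}
\end{equation*}
for some absolute constant $C>0$ depending only on the initialization (the bound from \cite{wang:12online} gives an $O(1/T)$ rate for the primal objective gap of ADMM on convex problems with linear constraints, with a constant that is controlled by the squared norm of the constraint matrix $\sum_j \bfP_j^T \bfP_j$ times the initial distance to the optimum; for our problem this norm is $Md$). Since we are considering regular LDPC codes, $d$ is a fixed constant and $M = O(N)$, so this simplifies to $\bfgamma^T \hat{\bfx} - \bfgamma^T \bfx^* = O(N/T)$.

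Second, given $\delta>0$, choose $T = \lceil C' /\delta \rceil$ for a sufficiently large absolute constant $C'$. Then the above bound yields $\bfgamma^T \hat{\bfx} - \bfgamma^T \bfx^* < N \delta$, and crucially $T$ depends only on $\delta$ and not on $N$, i.e.\ the number of iterations required is $O(1)$ in the block length.

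Third, I would account for the per-iteration cost. The $\bfx$-update, as derived in Section~\ref{Section:update_steps}, is a coordinatewise average of $|\Ne_v(i)|$ terms followed by a scalar projection onto $[0,1]$, hence takes $O(N)$ time in total. The $\bflambda$-update touches each of the $M d = O(N)$ replica coordinates once and is therefore $O(N)$. The $\bfz$-update consists of $M$ independent Euclidean projections onto $\PP_d$; invoking the linearithmic-in-$d$ projection algorithm promised in Section~\ref{Section:parity_poly}, each projection costs $O(d \log d) = O(1)$, so the $\bfz$-update is $O(M) = O(N)$. Summing, one iteration runs in $O(N)$ time, and multiplying by $O(1)$ iterations yields the claimed $O(N)$ total runtime.

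The only real subtlety, and the place I would be most careful, is the dependence of the constant $C$ on the penalty parameter $\mu$ and on the initial iterate $(\bfz^0,\bflambda^0)$: the bound in \cite{wang:12online} has the form $\mu \|\bfP\bfx^*-\bfz^0\|^2 + \|\bflambda^*-\bflambda^0\|^2/\mu$ divided by $T$, so one has to check that under a sensible initialization (e.g.\ $\bfz^0 = \bfzero$, $\bflambda^0 = \bfzero$) this quantity is $O(Md)$ uniformly in the instance, which it is because $\bfz_j \in \PP_d \subset [0,1]^d$ and the dual optimum $\bflambda^*$ is bounded in terms of the $\ell_\infty$ norm of $\bfgamma$ and the geometry of $\PP_d$. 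Once that is verified, the $O(N)$ total-time conclusion follows immediately from the two displayed equations above.
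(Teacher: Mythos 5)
Your proposal follows essentially the same route as the paper: the paper's entire argument is the paragraph preceding the proposition, namely citing the $O(Md\,\mu/T)$ primal-gap bound of \cite[Thm.~1]{wang:12online}, using $M=O(N)$ with $d$ and $\mu$ fixed to conclude $O(1)$ iterations suffice for a gap below $N\delta$, and then tallying the $O(N)$ per-iteration cost of the $\bfx$-, $\bfz$-, and $\bflambda$-updates. Your additional care about the initialization-dependent constant in the convergence bound is a refinement the paper itself does not carry out, but it does not change the structure of the argument.
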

We note that the experiments we present in Sec.~\ref{sec.simulations}
demonstrate that this convergence estimate is frequently conservative;
we often see convergence in a dozen iterations or fewer.  (See
Fig.~\ref{fig:RandomCode_BSC_iter} for experiments regarding iteration
requirements.)

\begin{algorithm}
\caption{Given an $N$-dimensional vector $\bfxTil\in\tilde{\mathcal{X}}^N$, $M\times N$ parity check matrix $\bfH$, and parameters $\mu$ and
  $\epsilon$, solve the decoding LP specified
  in~(\ref{LP:DecodingLP})}
\label{Algorithm:ADMM}
\begin{algorithmic}[1]

\STATE Construct the negative log-likelihood vector $\bfgamma$ based on
received word $\bfxTil$. Construct the $d \times N $ matrix $\bfP_j$ for all $j \in
\mathcal{J}$.

\STATE Initialize $\bfz_j$ and $\bflambda_j$ as the all zeros vector for
all $j \in \mathcal{J}$. Initialize iterate $k = 0$. For simplicity, we only specify iterate $k$ when determining stopping criteria.
\REPEAT
\FORALL{ $i \in \mathcal{I}$ }
\STATE Update \\
$ x_i \! \leftarrow \! \prod_{[0,1]}\!\!
\left( \frac{1}{|\Ne_v(i)|} \!  \left( \! \sum_{j \in \Ne_v(i)} \! \left( \!
z^{(i)}_j \!\! - \!\! \frac{1}{\mu} \lambda^{(i)}_j \! \right) \!\! -
\!\! \frac{1}{\mu} \gamma_i \! \right) \right)$.
\ENDFOR
\FORALL{ $ j \in \mathcal{J} $}

\STATE Set $ \bfv_j  = \bfP_j \bfx + \bflambda_j /\mu $.

\STATE Update $ \bfz_j \leftarrow \Proj_{\PP_d} (\bfv_j) $ where
$\Proj_{\PP_d} (\cdot)$ means project onto the parity polytope.

\STATE Update $\bflambda_j \leftarrow \bflambda_j + \mu \left( \bfP_j \bfx - \bfz_j\right) $.
\ENDFOR
\STATE $ k \! \leftarrow \! k+1$.
\UNTIL{ $ \sum_j { \| \bfP_j \bfx^k - \bfz^k_j \|^2_{2} } < \epsilon^2 Md $ \\and  $\sum_j { \| \bfz^{k}_j - \bfz^{k-1}_j \|^2_{2} } < \epsilon^2 Md$}\\
{\bf return} $\bfx$.
\end{algorithmic}
\end{algorithm}

\subsection{ADMM Decoding as Message Passing Algorithm}
\label{sec.admm_msg_passing}

We now present a message-passing interpretation of the ADMM approach
to LP decoding as presented in Algorithm~\ref{Algorithm:ADMM}.  For
simplicity, we establish this interpretation by identifying messages
passed between variable nodes and check nodes on a Tanner
graph.

We denote by $x_{ij} (k)$ the replica associated with the edge joining
variable node $i\in \mathcal{I}$ and check node $j \in \mathcal{J}$, where $k$
indicates the $k$th iteration.  Note that $x_{ij_1}(k)= x_{ij_2}(k) =
x_i^k$ for all $j_1,j_2 \in \mathcal{J}$, where $x_i^k$ is the value
of $x_i$ at $k$th iteration in Algorithm~\ref{Algorithm:ADMM}.  The
``message'' $\mia (k) := x_{ij} (k) $ is passed from variable node $i$ to check node
$j$ at the beginning of the $k$th iteration.  Incoming messages to
check node $j$ are denoted as $\main (k) := \{\mia(k):i\in
\Ne_c(j)\}$.  The $\mbf{z_j}$ can also be interpreted as the messages
passed from check node $j$ to the variable nodes in $\Ne_c(j)$,
denoted as $\maout (k) := \{\mai(k):i\in \Ne_c(j)\}$.  Let
$\mbf{\lambda_j' }:= \mbf{\lambda_j}/\mu$ and $\lai' :=
\lambda^{(i)}_j/\mu$.  Then, for all $j \in \Ne_v(i)$
\begin{equation*}
\mia(k+1) = \Proj_{[0,1]} \left(
\frac{1}{|\Ne_v(i)|}\sum_{j\in\mathcal{N}_c(j)}
\left[\mai(k)-\lai'(k)\right] - \frac{\gamma_i}{\mu} \right).
\end{equation*}
The $\mbf{z}$-update can be rewritten as
\begin{equation*}
\maout(k+1) = \Proj_{\PP_d}\left(\main(k) +
\mbf{\lambda'_j}(k)\right).
\end{equation*}
The $\lambda'_j$ update is
\begin{equation*}
\mbf{\lambda'_j}(k+1) = \mbf{\lambda'_j}(k) + \left(\main(k) -
\maout(k)\right).
\end{equation*} 
With this interpretation, it is clear that the ADMM algorithm
decouples the decoding problem and can be performed in a distributed
manner. 

Another nice way to see the decoupling of the decoding problem, and
the connection to message passing, is to use the normal factor graph
formalism~\cite{forney:IT01}.  Then, in a manner similar to that taken
in~\cite[Sec.~III-A]{yedidiaWangDraper:IT11}, replicas are associated
with edges of the normal graph and one creates a dynamics of replicas,
alternately trying to satisfy equality and parity-check constraints.
The dynamics of ADMM are, in general, distinct from those of the
``Divide-and-Concur'' algorithm studied
in~\cite{yedidiaWangDraper:IT11} but related, as recently shown
in~\cite{yedidiaPrincetonTalk}.

\section{The geometric structure of $\PP_d$, and efficient projection onto $\PP_d$}
\label{Section:parity_poly}

In this section we develop our efficient projection algorithm.  Recall
that $\Pa_d =\left\{ \bfe \in \{0,1\}^d \mid \ \| \bfe \|_1 \textrm{
  is even} \right\}$ and that $\PP_d = \mathrm{conv}(\Pa_d)$.
Generically we say that a point $\bfv \in\PP_d$ if and only if there
exist a set of $\bfe_i \in \Pa_d$ such that $\bfv = \sum_i \alpha_i
\bfe_i$ where $\sum_i \alpha_i = 1$ and $\alpha_i \geq 0$.  In
contrast to this generic representation, the initial objective of this
section is to develop a novel ``two-slice'' representation of any
point $\bfv \in \PP_d$: namely that any such vector can be written as
a convex combination of vectors with Hamming weight $r$ and $r+2$ for
some even integer $r$.  We will then use this representation to
construct an efficient projection.

We open the section in Section~\ref{sec.discussion} by describing the
structured geometry of $\PP_d$ and laying out the
results that will follow in ensuing sections.  In
Section~\ref{Section:characterization}, we prove a few necessary
lemmas illustrating some of the symmetry structure of the parity
polytope.  In Section~\ref{sec.parityOfProj} we develop the two-slice
representation and connect the $\ell_1$-norm of the projection of any
$\bfv \in \mathbb{R}^d$ onto $\PP_d$ to the (easily computed)
``constituent parity'' of the projection of $\bfv$ onto the unit
hypercube.  In Section~\ref{sec.projection} we present the projection
algorithm.

\subsection{Introduction to the geometry of $\PP_d$}
\label{sec.discussion}

In this section we discuss the geometry of $\PP_d$.  We develop
intuition and foreshadow the results to come.  We start by making a
few observations about $\PP_d$.  

\begin{itemize}
\item First, we can classify the vertices of $\PP_d$ by their weight.
  We do this by defining $\Pa_d^r$, the constant-weight analog of
  $\Pa_d$, to be the set of weight-$r$ vertices of $\PP_d$:
  \begin{equation}
    \Pa_d^r = \{ {\bf e} \in \{0,1\}^d \, | \, \|\bfe\|_1 = r\},
  \end{equation}
  i.e., the constant-weight-$r$ subcode of $\Pa_d$.  Since all
  elements of $\Pa_d$ are in some $\Pa_d^r$ for some even $r$, $\Pa_d
  = \cup_{0 \leq r \leq d \,: \, r \, \mathrm{even}} \Pa_d^r$.  This
  gives us a new way to think about characterizing the parity
  polytope,
\begin{equation*}
\PP_d = \textrm{conv}(\cup_{0 \leq r \leq d \, : \, r \,
  \mathrm{even}} \ \Pa_d^r).
\end{equation*}
\item Second, we define $\PP_d^r$ to be the convex hull of $\Pa_d^r$,
  \begin{equation}
\PP_d^r = \textrm{conv}(\Pa_d^r) = \mathrm{conv}(\{ \bfe \! \in \!
\{0,1\}^d \mid \| \bfe \|_1 \!=\!  r \}). \label{def.PPdr}
\end{equation}
This object is a ``permutahedron'', so termed because it is the convex
hull of all permutations of a single vector; in this case a length-$d$
binary vector with $r$ ones.  Of course,
\begin{equation*}
\PP_d = \textrm{conv}(\cup_{0 \leq r \leq d \, : \, r \,
  \mathrm{even}} \PP_d^r).
\end{equation*}
\item Third, define the affine hyper-plane consisting of all vectors
  whose components sum to $r$ as 
\begin{equation*}
\mathcal{H}_d^r = \{{\bf x} \in \mathbb{R}^d | {\bf 1}^T {\bf x} = r\}
\end{equation*} 
where ${\bf 1}$ is the length-$d$ all-ones vector.  We can visualize
$\PP_d^r$ as a ``slice'' through the the parity polytope defined as
the intersection of $\mathcal{H}_d^r$ with $\PP_d$.  In other words, a
definition of $\PP_d^r$ equivalent to~(\ref{def.PPdr}) is
\begin{equation*}
\PP_d^r = \PP_d \cap \mathcal{H}_d^r,
\end{equation*}
for $r$ an even integer.
\item Finally, we note that the $\PP_d^r$ are all parallel.  This
  follows since all vectors lying in any of these permutahedra are
  orthogonal to ${\bf 1}$.  We can think of the line segment that
  connects the origin to ${\bf 1}$ as the major axis of the parity
  polytope with each ``slice'' orthogonal to the axis.
\end{itemize}

The above observations regarding the geometry of $\PP_d$ are
illustrated in Fig.~\ref{fig:football}.  Our development will be as
follows: First, in Sec.~\ref{Section:characterization} we draw on a
theorem from~\cite{marshall:2009inequalities} about the geometry of
permutahedra to assert that a point ${\bf v} \in \mathbb{R}^d$ is in
$\PP_d^r$ if and only if a sorted version of ${\bf v}$ is {\em
  majorized} (see Definition~\ref{def.majorize}) by the length-$d$
vector consisting of $r$ ones followed by $d-r$ zeros (the sorted
version of any vertex of $\PP_d^r$).  This allows us to characterize
the $\PP_d^r$ easily.

\begin{figure}
\psfrag{&A}{\scriptsize{$(11111)$}}
\psfrag{&B}{\scriptsize{$(10111)$}}
\psfrag{&C}{\scriptsize{$(01111)$}}
\psfrag{&D}{ \color{blue}$\PP_5^4$}
\psfrag{&E}{ \color{blue}$\PP_5^2$}
\psfrag{&F}{{ \color{red}$\left(\frac{2}{5}\frac{2}{5}\frac{2}{5}\frac{2}{5}\frac{2}{5}\right)$}}
\psfrag{&G}{\scriptsize{$(11000)$}}
\psfrag{&H}{\scriptsize{$(10100)$}}
\psfrag{&I}{\scriptsize{$(00000)$}}
\psfrag{&J}{{ \color{red}$\left(\frac{4}{5}\frac{4}{5}\frac{4}{5}\frac{4}{5}\frac{4}{5}\right)$}}
\psfrag{&K}{\scriptsize{$(11011)$}}
\psfrag{&L}{\scriptsize{$(11101)$}}
\psfrag{&M}{\scriptsize{$(11110)$}}
\psfrag{&N}{\scriptsize{$(10100)$}}

    \begin{center}
    \includegraphics[width=8cm]{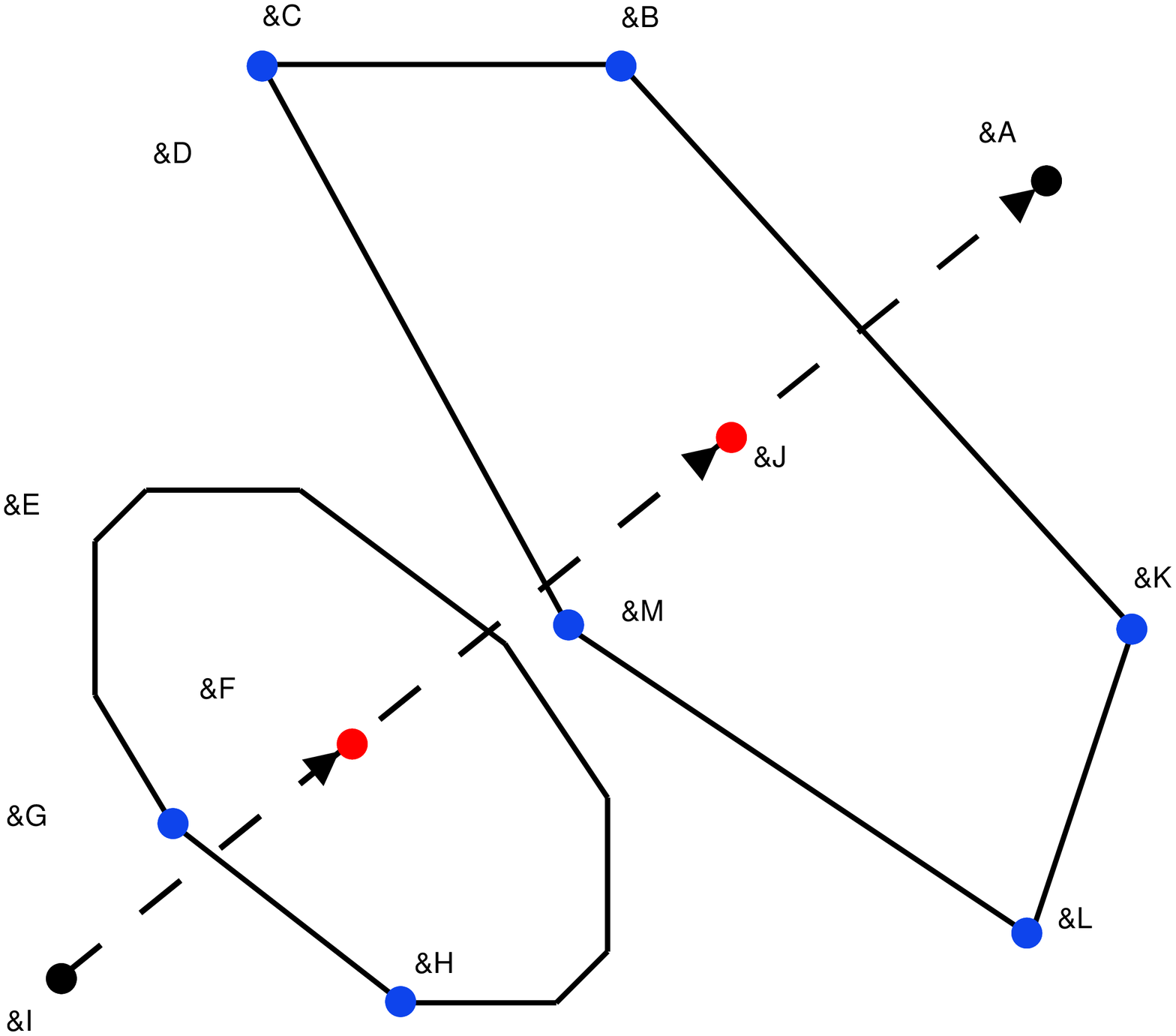}
    \end{center}
    \caption{The parity polytope $\PP_d$ can be expressed as the
      convex hull of ``slices'' through $\PP_d$, each of which
      contains all weight-$r$ vertices.  These sets, $\PP_d^r$ are
      permutahedra.  They are all orthogonal to the line segment
      connecting the origin to the all-ones vector.  The geometry is
      sketched for $d=5$.}
      \label{fig:football}
\end{figure}

Second, we rewrite any point ${\bf u} \in \PP_d$ as, per our second
point above, a convex combination of points in slices of different
weights $r$.  In other words ${\bf u} = \sum_{0 \leq r \leq d \, : \,
  r \, \mathrm{even}} \alpha_r {\bf u}_r$ where ${\bf u}_r \in
\PP_d^r$ and the $\alpha_r$ are the convex weightings.  We develop a
useful characterization of $\PP_d$, the ``two-slice''
Lemma~\ref{lemma:two_slice}, that shows that two slices always
suffices.  In other words we can always write ${\bfu} = \alpha {\bfu}_r + (1-\alpha) {\bfu}_{r+2}$ where ${\bfu}_r \in \PP_d^r$,
${\bfu}_{r+2} \in \PP_d^{r+2}$, $0 \leq \alpha \leq 1$, and $r =
\lfloor \| \bfu \|\rfloor_{\rm even}$, where $\lfloor a
\rfloor_{\textrm{even}}$ is the largest even integer less than or
equal to $a$.  We term the lower weight, $r$, the ``constituent''
parity of the vector.

Third, in Sec.~\ref{sec.parityOfProj} we show that given a point ${\bf
  v} \in \mathbb{R}^d$ that we wish to project onto $\PP_d$, it is
easy to identify the constituent parity of the projection.  To express
this formally, let $\Proj_{\PP_d}(\bfv)$ be the projection of $\bfv$
onto $\PP_d$.  Then, our statement is that we can easily find the even
integer $r$ such that $\Proj_{\PP_d}(\bfv)$ can be expressed as a
convex combination of vectors in $\PP_d^r$ and $\PP_d^{r+2}$.

Finally, in Sec.~\ref{sec.projection} we develop our projection
algorithm.  In short, our approach is as follows: Given a vector $\bfv
\in \mathbb{R}^d$ we first compute $r$, the constituent parity of its
projection.  Given the two-slice representation, projecting onto
$\PP_d$ is equivalent to determining an $\alpha \in [0,1]$, a vector
$\bfa \in \PP_d^r$, and a vector $ \bfb \in \PP_d^{r+2} $ such that
the $\ell_2$ norm of $ \bfv - \alpha \bfa - (1-\alpha) \bfb$ is
minimized.

In~\cite{barmanEtAl:allerton11} we showed that, given $\alpha$, this
projection can be accomplished in two steps.  We first scale $\PP_d^r$
by the convex weighting parameter $\alpha$ to obtain $\alpha \PP_d^r =
\{{\bf x} \in \mathbb{R}^d | 0 \leq x_i \leq \alpha, \sum_{i=1}^d x_i
= \alpha r\}$ and project $\bfv$ onto $\alpha \PP_d^r$. Then we
project the residual onto $(1-\alpha) \PP_d^r$.  The object $\alpha
\PP_d^r$ is an $\ell_1$ ball with box constraints.  Projection onto
$\alpha \PP_d^r$ can be done efficiently using a type of waterfilling.
Since the function $ \min_{\bfa \in \PP_d^r \textrm{, } \bfb \in
  \PP_d^{r+2} } \| \bfv - \alpha \bfa - (1-\alpha) \bfb \|^2_2$ is
convex in $\alpha$ we can perform perform a one-dimensional line
search (using, for example, the secant
method~\cite[p.~188]{AllenNumericalBook}) to determine the optimal value for
$\alpha$ and thence the desired projection.

In contrast to the original approach, in Section~\ref{sec.projection}
we develop a far more efficient algorithm that avoids the pair of
projections and the search for $\alpha$. In particular, taking
advantage of the convexity in $\alpha$ we use majorization to
characterize the convex hull of $\PP_d^r$ and $\PP_d^{r+2}$ in terms
of a few linear constraints (inequalities). As projecting onto the
parity polytope is equivalent to projecting onto the convex hull of
the two slices, we use the characterization to express the projection
problem as a quadratic program, and develop an efficient method that
directly solves the quadratic program.  Avoiding the search over
$\alpha$ yields a considerable speed-up over the original approach
taken in~\cite{barmanEtAl:allerton11}.

\subsection{Permutation Invariance of the Parity Polytope and Its Consequences}
\label{Section:characterization}

Let us first describe some of the essential features of the parity
polytope that are critical to the development of our efficient
projection algorithm.  First, note the following
\begin{proposition}\label{prop:member-permute}
 $\bfu \in \PP_d$ if and only if $\mathbf{\Sigma} \bfu$ is in the
parity polytope for every $d\times d$ permutation matrix $\mathbf{\Sigma}$.  
\end{proposition}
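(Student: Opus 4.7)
The proposition is a permutation-invariance statement about $\PP_d$, and it essentially reduces to the fact that permuting the coordinates of a binary vector preserves its Hamming weight (and hence its parity). So my plan is to prove each direction separately, with the backward direction being immediate and the forward direction requiring only a one-line argument using the convex-combination definition of $\PP_d$.

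For the backward implication, I would simply take $\mathbf{\Sigma}$ to be the identity matrix: if $\mathbf{\Sigma}\bfu \in \PP_d$ for \emph{every} $d\times d$ permutation matrix $\mathbf{\Sigma}$, then in particular $\bfu = I \bfu \in \PP_d$. This is the trivial direction.

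For the forward implication, suppose $\bfu \in \PP_d = \mathrm{conv}(\Pa_d)$. Then by the definition of the convex hull, there exist vectors $\bfe_1, \ldots, \bfe_K \in \Pa_d$ and nonnegative scalars $\alpha_1,\ldots,\alpha_K$ summing to $1$ such that $\bfu = \sum_{k=1}^K \alpha_k \bfe_k$. For any $d\times d$ permutation matrix $\mathbf{\Sigma}$, linearity gives
\begin{equation*}
\mathbf{\Sigma}\bfu = \sum_{k=1}^K \alpha_k \, \mathbf{\Sigma}\bfe_k.
\end{equation*}
The key observation is that each $\mathbf{\Sigma}\bfe_k$ is still a binary vector (permutation matrices map $\{0,1\}^d$ to itself) and its $\ell_1$ norm (equivalently, its Hamming weight) equals $\|\bfe_k\|_1$, which is even by assumption. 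Therefore $\mathbf{\Sigma}\bfe_k \in \Pa_d$ for every $k$, and $\mathbf{\Sigma}\bfu$ is displayed as a convex combination of elements of $\Pa_d$, so $\mathbf{\Sigma}\bfu \in \mathrm{conv}(\Pa_d) = \PP_d$.

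There is essentially no obstacle here; the entire content of the proposition is the observation that $\Pa_d$ itself is invariant under coordinate permutations, which transfers to its convex hull by the linearity of the permutation action. The statement is being recorded because it underwrites the use of sorting/majorization arguments in the subsequent sections (in particular to reduce projection onto $\PP_d$ to projection onto a canonical sorted representative), so I would keep the proof extremely short and simply note this reduction.
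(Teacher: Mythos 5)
Your proof is correct and matches the paper's reasoning: the paper dispenses with this proposition in one line by noting that the vertex set $\Pa_d$ is invariant under coordinate permutations, and your argument simply spells out why that invariance transfers to the convex hull via linearity. No issues.
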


This proposition follows immediately because the vertex set
$\mathbb{P}_d$ is invariant under permutations of the coordinate axes.

Since we will be primarily concerned with projections onto the parity
polytope, let us consider the optimization problem
\begin{equation}
  \textrm{minimize}_{\bfz} \| \bfv - \bfz \|_2~~~\textrm{subject to}~ \bfz \in\PP_d\,.
\end{equation}
The optimal $\bfz^*$ of this problem is the Euclidean projection of
$\vct{v}$ onto $\PP_d$, which we denote by $\bfz^*=\Pi_{\PP_d}(\bfv)$.
Again using the symmetric nature of $\PP_d$, we can show the useful
fact that if $\bfv$ is sorted in descending order, then so is
$\Pi_{\PP_d}(\bfv)$.

\begin{proposition}\label{prop:proj-permute}
Given a vector $\bfv \in \mathbb{R}^d$, the component-wise ordering of
$\Proj_{\PP_d}(\bfv)$ is same as that of $\bfv$.
\end{proposition}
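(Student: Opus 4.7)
The plan is to prove the proposition by a simple exchange argument that leverages Proposition~\ref{prop:member-permute}. Suppose for contradiction that there exist indices $i, j$ with $v_i > v_j$ but $z^*_i < z^*_j$, where $\bfz^* = \Proj_{\PP_d}(\bfv)$. I will construct a strictly better feasible point by swapping the $i$th and $j$th coordinates of $\bfz^*$, obtaining $\tilde{\bfz}$. Since the vertex set $\Pa_d$ is invariant under coordinate permutations, Proposition~\ref{prop:member-permute} guarantees $\tilde{\bfz} \in \PP_d$.

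Next I would compute the change in squared distance to $\bfv$. Only the $i$th and $j$th coordinates contribute to the difference, so
\begin{align*}
\| \bfv - \bfz^* \|_2^2 - \| \bfv - \tilde{\bfz} \|_2^2
&= (v_i - z^*_i)^2 + (v_j - z^*_j)^2 - (v_i - z^*_j)^2 - (v_j - z^*_i)^2 \\
&= 2(v_i - v_j)(z^*_j - z^*_i).
\end{align*}
Under the assumption $v_i > v_j$ and $z^*_j > z^*_i$, the right-hand side is strictly positive, so $\tilde{\bfz}$ is strictly closer to $\bfv$ than $\bfz^*$ is, contradicting optimality of $\bfz^*$. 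Hence whenever $v_i > v_j$, we must have $z^*_i \geq z^*_j$, which is the claimed order preservation.

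I expect no serious obstacle here; the main subtlety is merely bookkeeping about ties. If $v_i = v_j$, the ordering of $z^*_i$ and $z^*_j$ can be arbitrary (swapping them leaves $\|\bfv - \bfz^*\|_2^2$ unchanged), so the statement should be read as: there exists a projection whose components are sorted consistently with those of $\bfv$, or equivalently the above exchange inequality holds weakly. Thus the whole argument is a one-paragraph rearrangement-inequality style proof, once Proposition~\ref{prop:member-permute} is in hand.
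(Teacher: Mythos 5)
Your proof is correct and is essentially identical to the paper's: both argue by contradiction, swap the two offending coordinates (justified by the permutation invariance of $\PP_d$), and observe that the squared distance strictly decreases by $2(v_i - v_j)(z^*_j - z^*_i) > 0$. Your added remark about ties is a sensible clarification but not a substantive departure.
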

\begin{proof}
We prove the claim by contradiction. Write $\bfz^* =
\Proj_{\PP_d}(\bfv)$ and suppose that for indices $i$ and $j$ we have
$v_i > v_j $ but $z_i^* < z_j^*$.  Since all permutations of $\bfz^*$
are in the parity polytope, we can swap components $i$ and $j$ of
$\bfz^*$ to obtain another vector in $\PP_d$. Under the assumption
$z_j^* > z_i^*$ and $ v_i - v_j > 0$ we have $z_j^*( v_i - v_j) >
z_i^*( v_i - v_j)$. This inequality implies that $(v_i - z^*_i)^2 +
(v_j - z^*_j)^2 > (v_i - z^*_j)^2 + (v_j - z^*_i)^2 $, and hence we
get that the Euclidean distance between $\bfv$ and $\bfz^*$ is greater
than the Euclidean distance between $\bfv$ and the vector obtained by
swapping the components.
\end{proof}

These two propositions allow us assume through the remainder of this
section that our vectors are presented sorted in descending order
unless explicitly stated otherwise.

The permutation invariance of the parity polytope also lets us also
employ powerful tools from the theory of majorization to simplify
membership testing and projection.  The fundamental theorem we exploit
is based on the following definition.

\begin{definition}\label{def.majorize}
Let $\bfu$ and $\bfw$ be $d$-vectors sorted in decreasing order.  The
vector $\bfw$ \emph{majorizes} $\bfu$ if
\begin{align*}
	\sum_{k=1}^q u_k &\leq \sum_{k=1}^q w_k \quad \forall~1\leq
        q<d,\\ \sum_{k=1}^d u_k &= \sum_{k=1}^d w_k\,.
\end{align*}
\end{definition}

Our results rely on the following Theorem, which states that a vector
lies in the convex hull of all permutations of another vector if and
only if the former is majorized by the latter (see~\cite{marshall:2009inequalities} and references therein).
\begin{theorem}
\label{theorem:major_hull}
Suppose $\bfu$ and $\bfw$ are $d$-vectors sorted in decreasing order.  Then
$\bfu$ is in the convex hull of all permutations of $\bfw$ if and only if
$\bfw$ majorizes $\bfu$. 
\end{theorem}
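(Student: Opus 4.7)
The plan is to prove the two directions of the ``if and only if'' separately. Both $\bfu$ and $\bfw$ are assumed to be sorted in decreasing order throughout, so the partial sums in Definition~\ref{def.majorize} are literally the running sums of the given vectors.

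For the forward direction (membership in the convex hull implies majorization), I would suppose $\bfu = \sum_\ell \alpha_\ell \Sigma_\ell \bfw$ with $\alpha_\ell \geq 0$, $\sum_\ell \alpha_\ell = 1$, and the $\Sigma_\ell$ permutation matrices. Since every permutation preserves the sum of entries, $\sum_{k=1}^d u_k = \sum_\ell \alpha_\ell \sum_{k=1}^d w_k = \sum_{k=1}^d w_k$. For the partial-sum inequalities, $\sum_{k=1}^q (\Sigma_\ell \bfw)_k$ is the sum of some $q$ of the entries of $\bfw$ and is therefore at most the sum of the $q$ largest entries, namely $\sum_{k=1}^q w_k$ (because $\bfw$ is sorted in decreasing order). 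Taking the convex combination yields $\sum_{k=1}^q u_k \leq \sum_{k=1}^q w_k$ for every $q < d$, which is precisely the majorization condition.

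For the converse I would proceed constructively by successive T-transforms (``Robin Hood'' transfers). The key observation is that, as long as $\bfw \neq \bfu$, the common total sum forces indices $i < j$ with $w_i > u_i$ and $w_j < u_j$. I would then set $t = \min\{w_i - u_i,\, u_j - w_j,\, (w_i - w_j)/2\} > 0$ and replace $(w_i, w_j)$ by $(w_i - t,\, w_j + t)$. Each such update is the convex combination $(1-\lambda)\bfw + \lambda(\text{swap}_{ij}\bfw)$ for $\lambda = t/(w_i - w_j)$, so when the updates are unrolled they express the final vector as a convex combination of permutations of the \emph{original} $\bfw$. Iterating terminates in at most $d$ steps because the choice of $t$ guarantees that each transfer either makes a new coordinate exactly equal to the corresponding coordinate of $\bfu$ or merges two coordinates into equality, so the number of indices where the running vector disagrees strictly with $\bfu$ decreases each step.

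The main obstacle is verifying that majorization is preserved under each T-transform (after re-sorting the updated vector if necessary) and that the termination bookkeeping is airtight: only the partial sums $\sum_{k=1}^q$ for $i \leq q < j$ change, and the bound $t \leq w_i - u_i$ is exactly what prevents any of them from dropping below $\sum_{k=1}^q u_k$, while $t \leq (w_i - w_j)/2$ keeps the updated entries in the correct order. If that bookkeeping becomes unwieldy, a cleaner alternative is to invoke the classical pair Hardy--Littlewood--P\'olya (majorization is equivalent to $\bfu = D\bfw$ for some doubly stochastic $D$) together with Birkhoff's theorem (every doubly stochastic matrix is a convex combination of permutation matrices); composing these immediately writes $\bfu = \sum_\ell \alpha_\ell \Sigma_\ell \bfw$ and packages both obstacles into two well-known lemmas.
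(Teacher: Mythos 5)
Your proposal is correct, but it takes a different route from the paper for the simple reason that the paper does not prove this theorem at all: Theorem~\ref{theorem:major_hull} is the classical Hardy--Littlewood--P\'olya/Rado characterization of permutahedra, which the paper imports from \cite{marshall:2009inequalities}, offering only a one-line sketch of the ``only if'' direction via the doubly stochastic matrix $\bfQ=\sum_i p_i \mathbf{\Sigma}_i$. Your forward direction is that same observation made fully explicit, and your converse supplies the substantive half that the paper delegates to the literature, via the standard T-transform construction. What your approach buys is a self-contained, constructive proof; what the paper's citation buys is brevity and freedom from exactly the bookkeeping you identify as the delicate part. Your fallback (Hardy--Littlewood--P\'olya plus Birkhoff) is the standard packaging and is also perfectly acceptable.

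Two remarks on the bookkeeping. First, your termination argument is slightly misstated: when $t=(w_i-w_j)/2$ is the binding term, the two updated entries become equal to \emph{each other}, not to the corresponding entries of $\bfu$, so ``the number of disagreeing indices decreases'' does not follow in that case. It is rescued by noting that this term can never be the unique minimizer: $ (w_i-w_j)/2<w_i-u_i$ gives $2u_i<w_i+w_j$, and $(w_i-w_j)/2<u_j-w_j$ gives $w_i+w_j<2u_j$, which together force $u_i<u_j$, contradicting $i<j$ and the sortedness of $\bfu$. Hence one of the first two terms always attains the minimum and a new coordinate matches $\bfu$ at every step. Second, with an arbitrary choice of $i<j$ the update can break sortedness at intermediate positions (take $\bfw=(1,1,0,0)$, $\bfu=(\tfrac12,\tfrac12,\tfrac12,\tfrac12)$, $i=1$, $j=3$); choosing $j$ as the smallest index with $w_j<u_j$ and $i$ as the largest index below $j$ with $w_i>u_i$ ensures every coordinate strictly between them already agrees with $\bfu$ and the updated vector stays sorted. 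Re-sorting is in any case harmless for the unrolling---a permutation of a convex combination of permutations of $\bfw$ is still a convex combination of permutations of $\bfw$---but the careful choice of $i,j$ makes it unnecessary.
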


To gain intuition for why this theorem might hold, suppose that $\bfu$
is in the convex hull of all of the permutations of $\bfw$.  Then
$\bfu = \sum_{i=1}^n p_i \mathbf{\Sigma}_i \bfw$ with $\mathbf{\Sigma}_i$ being
permutation matrices, $p_i \geq 0$, and $\vct{1}^T \vct{p}=1$.  The
matrix $\bfQ = \sum_{i=1}^n p_i \mathbf{\Sigma}_i$ is doubly stochastic, and
one can immediately check that if $\bfu = \bfQ \bfw$ and $\bfQ$ is
doubly stochastic, then $\bfw$ majorizes $\bfu$.

To apply majorization theory to the parity polytope, begin with one of
the permutahedra $\PP_d^s$.  We recall that $\PP_d^s$ is equal to the
convex hull of all binary vectors with weight $s$, equivalently the
convex hull of all permutations of the vector consisting of $s$ ones
followed by $d-s$ zeros.  Thus, by Theorem~\ref{theorem:major_hull},
$\bfu \in [0,1]^d$ is in $\PP_d^s$ if and only if
\begin{align}
	\sum_{k=1}^q u_k &\leq \min (q,s) \quad \forall \ 1\leq
        q<d, \label{eq.relOne}\\ \sum_{k=1}^d u_k &=
        s. \label{eq.relTwo}
\end{align}

The parity polytope $\PP_d$ is simply the convex hull of all of the $\PP_d^s$ with $s$
even.  Thus, we can use majorization to provide an alternative
characterization of the parity polytope to that of Yannakakis or Jeroslow.

\begin{lemma}
\label{lemma:majorize}
A sorted vector $\bfu \in \PP_d$ if and only if there exist non-negative
coefficients $\{ \mu_s \}_{\textrm{even } s \leq d}$ such that
\begin{align}
\sum_{s~\mathrm{even}}^d \mu_s & = 1, \quad
\mu_s \geq 0.\label{eq:conv-comb}\\
\sum_{k=1}^q u_k & \leq \sum_{s~\mathrm{even}}^d \mu_s \min(q,s) \quad
\forall~1\leq q<d \label{eq:majorize_bnd} \\ 
\sum_{k=1}^d u_k &= \sum_{s~\mathrm{even}}^d \mu_s s. \label{eq:majorize_eq}\\ \nonumber
\end{align}
\end{lemma}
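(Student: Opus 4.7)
The plan is to prove both directions by leveraging Theorem~\ref{theorem:major_hull} and the fact that $\PP_d$ is itself $\mathrm{conv}(\cup_{s \text{ even}} \PP_d^s)$. The reverse direction is the cleaner one and will rely almost entirely on majorization, while the forward direction will come directly from the convex-combination definition of $\PP_d$.

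For the reverse direction, given the coefficients $\{\mu_s\}$, I would first define the reference vector $\bfw := \sum_{s \text{ even}}^d \mu_s \bar{\bfw}_s$, where $\bar{\bfw}_s \in \{0,1\}^d$ has its first $s$ entries equal to $1$ and the rest equal to $0$. Since each $\bar{\bfw}_s$ is sorted in decreasing order, so is $\bfw$, and because each $\bar{\bfw}_s$ is an even-weight binary vector (hence a vertex of $\PP_d$), the convex combination $\bfw$ lies in $\PP_d$. A short calculation then shows that $\sum_{k=1}^q w_k = \sum_s \mu_s \min(q,s)$ and $\sum_{k=1}^d w_k = \sum_s \mu_s s$, so hypotheses (\ref{eq:majorize_bnd}) and (\ref{eq:majorize_eq}) are precisely the statement that $\bfw$ majorizes the sorted vector $\bfu$. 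By Theorem~\ref{theorem:major_hull}, $\bfu$ lies in $\mathrm{conv}(\mathrm{perm}(\bfw))$. Proposition~\ref{prop:member-permute} tells us every permutation of $\bfw$ is in $\PP_d$, and convexity of $\PP_d$ then places $\bfu \in \PP_d$.

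For the forward direction, suppose $\bfu \in \PP_d$. Using $\Pa_d = \cup_{s \text{ even}} \Pa_d^s$, any convex decomposition of $\bfu$ into vertices of $\PP_d$ can be grouped by weight to yield $\bfu = \sum_{s \text{ even}}^d \mu_s \bfu_s$ with $\bfu_s \in \PP_d^s$, $\mu_s \geq 0$, and $\sum_s \mu_s = 1$ (setting $\bfu_s$ arbitrarily when $\mu_s = 0$). The equality (\ref{eq:majorize_eq}) follows by summing all coordinates and using $\sum_{k=1}^d (\bfu_s)_k = s$ for every $\bfu_s \in \PP_d^s$. For the inequality (\ref{eq:majorize_bnd}), note that $\bfu_s \in [0,1]^d$ with coordinate sum $s$, so $\sum_{k=1}^q (\bfu_s)_k \leq \min(q, s)$ regardless of ordering; multiplying by $\mu_s$ and summing gives $\sum_{k=1}^q u_k \leq \sum_s \mu_s \min(q,s)$.

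The only subtle point, which I would flag but not belabor, is that in the forward direction the individual $\bfu_s$ need not be sorted, whereas $\bfu$ is; this does not matter because the bound $\sum_{k=1}^q (\bfu_s)_k \leq \min(q,s)$ holds for any $q$ entries. The hardest conceptual step is really recognizing that the inequalities in (\ref{eq:majorize_bnd}) are a majorization statement with respect to the hidden reference vector $\bfw$; once that observation is made, Theorem~\ref{theorem:major_hull} does the heavy lifting and no further work is required.
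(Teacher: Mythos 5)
Your proof is correct and follows essentially the same route as the paper: the converse direction constructs the reference vector $\bfw=\sum_s\mu_s\bfb_s$ and invokes Theorem~\ref{theorem:major_hull} exactly as the paper does, and the forward direction is the same convex-combination argument (the paper states it more tersely via the observation that each weight-$s$ vertex satisfies the conditions with $\mu_s=1$). Your explicit handling of the unsorted $\bfu_s$ in the forward direction is a welcome clarification of a point the paper glosses over.
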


\begin{proof}
First, note that every vertex of $\PP_d$ of weight $s$ satisfies these inequalities
with $\mu_s=1$ and $\mu_{s'}=0$ for $s'\neq s$.  Thus $\bfu \in \PP_d$
must satisfy \eq{conv-comb}-\eq{majorize_eq}.  Conversely, if $\bfu$
satisfies \eq{conv-comb}-\eq{majorize_eq},  then $\bfu$ is majorized by the vector
\[
\bfw = \sum_{s~\mathrm{even}}^d \mu_s \bfb_s
\]
where $\bfb_s$ is a vector consisting of $s$ ones followed by $d-s$
zeros.  $\bfw$ is contained in $\PP_d$ as are all of its
permutations.  Thus,  we conclude that $\bfu$ is also contained in $\PP_d$.
\end{proof}

While Lemma~\ref{lemma:majorize} characterizes the containment of a
vector in $\PP_d$, the relationship is not one-to-one; for a
particular $\bfu \in \PP_d$ there can be many sets $\{\mu_s\}$ that
satisfy the lemma.  We will next show that there is always one assignment of
$\mu_s$ with only two non-zero $\mu_s$.

\subsection{Constituent Parity of the Projection}
\label{sec.parityOfProj}

For $a\in \mathbb{R}$, let $\lfloor a \rfloor_{\textrm{even}}$ denote the
``even floor'' of $a$, i.e., the largest even integer $r$ such that $r
\leq a$.    Define the ``even-ceiling,'' $\lceil a \rceil_{\textrm{even}}$ similarly.
For a vector $\bfu$ we term $\lfloor \| \bfu \|_1
\rfloor_{\textrm{even}}$ the {\em constituent parity} of vector
$\bfu$.  In this section we will show that if $\bfu \in \PP_d$ has
constituent parity $r$, then it can be written as a convex combination
of binary vectors with weight equal to $r$ and $r+2$. This result is
summarized by the following lemma:

\begin{lemma} (``Two-slice'' lemma)
\label{lemma:two_slice}
A vector $\bfu \in \mathbb{PP}_d$ iff $\bfu$ can be
expressed as a convex combination of vectors in $\PP_d^r$ and
$\PP_d^{r+2}$ where $r= \lfloor \|\bfu\|_1 \rfloor_{\textrm{even}}$.
\end{lemma}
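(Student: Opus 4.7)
The plan has two directions. The $(\Leftarrow)$ direction is immediate: if $\bfu = \alpha\bfa + (1-\alpha)\bfb$ with $\bfa\in\PP_d^r$, $\bfb\in\PP_d^{r+2}$, $\alpha\in[0,1]$, then $\bfu\in\PP_d$ by convexity, and summing coordinates gives $\|\bfu\|_1=\alpha r + (1-\alpha)(r+2)\in[r,r+2]$, so $\lfloor\|\bfu\|_1\rfloor_{\textrm{even}}=r$.

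For $(\Rightarrow)$, by Proposition~\ref{prop:member-permute} it suffices to find a decomposition for the sorted version of $\bfu$, so assume $\bfu$ is sorted in decreasing order. Put $\alpha := (r+2-\|\bfu\|_1)/2\in[0,1]$ and introduce the sorted candidate $\bfw := \alpha\bfb_r + (1-\alpha)\bfb_{r+2}$, where $\bfb_s\in\{0,1\}^d$ is the vector whose first $s$ coordinates equal $1$ and the rest $0$. Its partial sums are $q$ for $q\le r$, equal $r+(1-\alpha)=(r+\|\bfu\|_1)/2$ at $q=r+1$, and $\|\bfu\|_1$ for $q\geq r+2$. The strategy is to verify that $\bfu$ is majorized by $\bfw$ in the sense of Definition~\ref{def.majorize} and then invoke Theorem~\ref{theorem:major_hull}.

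The inequalities $\sum_{k=1}^q u_k\le q$ for $q\leq r$ are immediate from $\bfu\in[0,1]^d$, and for $q\geq r+2$ they follow from $\sum_{k=1}^q u_k\leq\|\bfu\|_1$. The main obstacle is the single middle inequality at $q=r+1$, namely
\[
  \sum_{k=1}^{r+1} u_k \;\leq\; \tfrac{1}{2}\bigl(r+\|\bfu\|_1\bigr).
\]
I would establish this by invoking Lemma~\ref{lemma:majorize} on $\bfu\in\PP_d$ to obtain weights $\{\mu_s\}_{s\text{ even}}$ with $\sum_s\mu_s=1$, $\sum_s s\mu_s = \|\bfu\|_1$, and $\sum_{k=1}^{r+1}u_k \leq \sum_s\mu_s\min(r+1,s)$. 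Writing $A:=\sum_{s\leq r}\mu_s$, $B:=\sum_{s\geq r+2}\mu_s = 1-A$, $S_A:=\sum_{s\leq r}s\mu_s$, and $S_B:=\sum_{s\geq r+2}s\mu_s$, one computes $\sum_s\mu_s\min(r+1,s) = S_A + (r+1)B$, and then combines the three relations $S_A\leq rA$, $S_B\geq (r+2)B$, and $S_A+S_B=\|\bfu\|_1$ to conclude $S_A+(r+1)B \leq (r+\|\bfu\|_1)/2$ via a short rearrangement (the key being that $rA+(2r+2)B-(r+2)B-rA-rB=0$).

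With $\bfu$ majorized by $\bfw$, Theorem~\ref{theorem:major_hull} (in its doubly stochastic/Birkhoff form) writes $\bfu=\sum_i p_i\mathbf{\Sigma}_i\bfw$ with $p_i\geq 0$, $\sum_i p_i = 1$, and $\mathbf{\Sigma}_i$ permutation matrices. Linearity of $\bfw$ in $\bfb_r$ and $\bfb_{r+2}$ then gives $\bfu = \alpha\bfu_r + (1-\alpha)\bfu_{r+2}$ where $\bfu_r := \sum_i p_i\mathbf{\Sigma}_i\bfb_r\in\PP_d^r$ and $\bfu_{r+2} := \sum_i p_i\mathbf{\Sigma}_i\bfb_{r+2}\in\PP_d^{r+2}$, since each $\PP_d^s$ is by definition the convex hull of the permutations of $\bfb_s$. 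Boundary cases ($\alpha\in\{0,1\}$, or $r+2>d$) degenerate to one-slice representations and are handled trivially.
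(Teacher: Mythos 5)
Your proof is correct and follows the same overall skeleton as the paper's: reduce to sorted vectors, take the same $\alpha=(r+2-\|\bfu\|_1)/2$, note that the majorization inequalities for $q\le r$ and $q\ge r+2$ are automatic, and reduce everything to the single middle inequality $\sum_{k=1}^{r+1}u_k\le (r+\|\bfu\|_1)/2$ (which is the paper's $r+1-\alpha$). Where you genuinely diverge is in how that inequality is established. The paper maximizes $\sum_s\mu_s\min(s,r+1)$ over all weight assignments satisfying \eq{conv-comb} and \eq{majorize_eq}, writes down the dual LP, and exhibits a primal/dual pair meeting the KKT conditions with optimal value $r+1-\alpha$. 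You instead split the weights into $A=\sum_{s\le r}\mu_s$ and $B=\sum_{s\ge r+2}\mu_s$ and get the bound directly from $S_A\le rA$, $S_B\ge (r+2)B$, and $S_A+S_B=\|\bfu\|_1$: after clearing denominators the target inequality is exactly $S_A+(r+2)B\le rA+S_B$, the sum of the first two relations. Your route is more elementary and avoids duality entirely; the paper's LP argument buys the additional fact that $r+1-\alpha$ is the \emph{exact} maximum of the right-hand side, which is how they phrase the step. You are also somewhat more explicit than the paper at the end, invoking the Birkhoff form of Theorem~\ref{theorem:major_hull} to turn majorization by $\bfw=\alpha\bfb_r+(1-\alpha)\bfb_{r+2}$ into the actual decomposition $\bfu=\alpha\bfu_r+(1-\alpha)\bfu_{r+2}$ with $\bfu_s\in\PP_d^s$, a step the paper leaves implicit after Lemma~\ref{lemma:majorize}. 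One small nit: in the converse direction, if $\|\bfu\|_1=r+2$ exactly then $\lfloor\|\bfu\|_1\rfloor_{\textrm{even}}=r+2$ rather than $r$, so your parenthetical claim that the even floor equals $r$ fails at that boundary; this is harmless for the lemma (the combination degenerates to a single slice) but worth flagging.
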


\begin{proof}  Consider any (sorted) $\bfu \in \PP_d$.  Lemma~\ref{lemma:majorize} 
tells us that there is always (at least one) set $\{\mu_s\}$ that
satisfy~(\ref{eq:conv-comb})--(\ref{eq:majorize_eq}).  Letting $r$ be
defined as in the lemma statement, we define $\alpha$ to be the unique
scalar between zero and one that satisfies the relation $\|\bfu \|_1 =
\alpha r + (1-\alpha) (r+2)$:
\begin{equation}
\alpha = \frac{2 + r -\|\bfu\|_1}{2}. \label{eq.choiceOfAlpha}
\end{equation}
Then, we choose the following candidate assignment: $\mu_r = \alpha$,
$\mu_{r+2} = 1-\alpha$, and all other $\mu_s = 0$.  We show that this
choice satisfies~(\ref{eq:conv-comb})--(\ref{eq:majorize_eq}) which
will in turn imply that there is a $\bfu_r \in \PP_d^r$ and a $\bfu_{r+2}
\in \PP_d^{r+2}$ such that $\bfu = \alpha \bfu_r + (1-\alpha)
\bfu_{r+2}$.

First, by the definition of $\alpha$, (\ref{eq:conv-comb})
and~(\ref{eq:majorize_eq}) are both satisfied.  Further, for the
candidate set the relations~(\ref{eq:majorize_bnd})
and~(\ref{eq:majorize_eq}) simplify to
\begin{align}
\sum_{k=1}^q u_k & \leq \alpha \min (q,r) \! + \! (1\!-\!\alpha)\min
(q,r\!+\!2), \ \forall \ 1\!\leq \! q \! < \! d,
\label{eq:simple_majorize_bnd}\\
\label{eq:simple_majorize_eq}	\sum_{k=1}^d u_k & = \alpha r + (1-\alpha) (r+2).
\end{align}
To show that~(\ref{eq:simple_majorize_bnd}) is satisfied is
straightforward for the cases $q \leq r$ and $q \geq r+2$.  First
consider any $q \leq r$. Since $\min(q, r) = \min(q, r+2) = q$,
$u_k\leq 1$ for all $k$, and there are only $q$ terms,
(\ref{eq:simple_majorize_bnd}) must hold.  Second, consider any $q
\geq r+2$.  We use~(\ref{eq:simple_majorize_eq}) to write
$\sum_{k=1}^q u_k = \alpha r + (1-\alpha) (r+2) - \sum_{q+1}^d u_k$.
Since $u_k \geq 0$ this is upper bounded by $\alpha r + (1-\alpha)
(r+2)$ which we recognize as the right-hand side
of~(\ref{eq:simple_majorize_bnd}) since $r = \min(q,r)$ and $r+2 =
\min(q,r+2)$.

It remains to verify only one more inequality
in~\eq{simple_majorize_bnd} namely the case when $q = r+1$, which is
\begin{equation*}
\sum_{k=1}^{r+1} u_k \leq \alpha r + (1-\alpha)(r+1) = r+ 1-\alpha.
\end{equation*}
To show that the above inequality holds, we maximize the
right-hand-side of \eq{majorize_bnd} across {\em all} valid choices of
$\{\mu_s\}$ and show that the resulting maximum is exactly $r+
1-\alpha$.  Since this maximum is attainable by some choice of
$\{\mu_s\}$ and our choice meets that bound, our choice is a valid
choice.

The details are as follows: Since $\bfu \in \PP_d$ any valid choice for
$\{\mu_s\}$ must satisfy \eq{conv-comb} which, for $q = r+1$, is
\begin{equation}\label{eq:majorize-bnd-rp1}
\sum_{k=1}^{r+1} u_k \leq \sum_{s~\mathrm{even}}^d \mu_s   \min (s,r+1).
\end{equation}
To see that across {\em all} valid choice of $\{\mu_s\}$ the largest
value attainable for the right hand side is precisely $r + 1-\alpha$
consider the linear program
\begin{equation*}\label{eq:primal-lp}
	\begin{array}{ll} \mbox{maximize} &  \sum_{s~\mathrm{even}} \mu_s   \min(s,r+1)\\
	\mbox{subject to} & \sum_{s~\mathrm{even}} \mu_s=1\\ 
& \sum_{s~\mathrm{even}} \mu_s s = \alpha r + (1-\alpha)(r+2) \\ 
&\mu_s\geq 0.
	\end{array}
\end{equation*}
The first two constraints are simply~(\ref{eq:conv-comb})
and~(\ref{eq:majorize_eq}).  Recognizing $\alpha r + (1-\alpha)(r+2) =
r+2-2\alpha$, the dual program is
\begin{equation*}
	\begin{array}{ll} \mbox{minimize} &  (r+2-2\alpha) \lambda_1 + \lambda_2\\
	\mbox{subject to} & \lambda_1 s + \lambda_2 \geq \min(s,r+1)~\forall~s~\mbox{even}.
	\end{array}	
\end{equation*}
Setting $\mu_{r}=\alpha$, $\mu_{r+2}=(1-\alpha)$, the other primal
variable to zero, $\lambda_1 = 1/2$, and $\lambda_2=r/2$, satisfies
the Karush-Kuhn-Tucker (KKT) conditions for this primal/dual pair of
LPs.  The associated optimal cost is $r+1-\alpha$.  Thus, the right
hand side of~\eq{majorize-bnd-rp1} is at most $r+1-\alpha$.

We have proved that if $\bfu \in \PP_d$ then the choice of $r =
\lfloor \|\bfu\|_1 \rfloor_{\textrm{even}}$ and $\alpha$ as
in~(\ref{eq.choiceOfAlpha}) satisfies the requirements of
Lemma~\ref{lemma:majorize} and so we can express $\bfu$ as $\bfu =
\alpha \bfu_r + (1-\alpha) \bfu_{r+2}$.  The converse---given a
vector $\bfu$ that is a convex combination of vectors in $\PP_d^r$ and
$\PP_d^{r+2}$ it is in $\PP_d$---holds because $\textrm{conv}(\PP_d^r
\cup \PP_d^{r+2}) \subseteq \PP_d$.
\end{proof}

A useful consequence of Theorem~\ref{theorem:major_hull} is the following
corollary.
\begin{corollary}
\label{corollary:even_containment}
Let $\bfu$ be a vector in $[0,1]^d$.  If $\sum_{k=1}^d u_k$ is an even
integer then $ \bfu \in \PP_d$.
\end{corollary}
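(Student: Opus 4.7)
The plan is to invoke Lemma~\ref{lemma:majorize} (or equivalently Theorem~\ref{theorem:major_hull} applied to the single slice $\PP_d^s$) to reduce the claim to a pair of easy inequalities. Specifically, writing $s = \sum_{k=1}^d u_k$ (an even integer by hypothesis), I will show the stronger statement that $\bfu \in \PP_d^s$, which suffices since $\PP_d^s \subseteq \PP_d$ whenever $s$ is even.

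First, by Proposition~\ref{prop:member-permute}, membership in $\PP_d$ is invariant under permutations, so I may assume without loss of generality that $\bfu$ is sorted in decreasing order. Then recall from~\eq{relOne}--\eq{relTwo} that a sorted vector lies in $\PP_d^s$ iff $\sum_{k=1}^q u_k \leq \min(q,s)$ for every $1 \leq q < d$ and $\sum_{k=1}^d u_k = s$. The equality is exactly our hypothesis, so only the $q$-inequalities remain.

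Next I split into two cases. If $q \leq s$, then because $u_k \in [0,1]$ I immediately get $\sum_{k=1}^q u_k \leq q = \min(q,s)$. If $q > s$, then because $u_k \geq 0$ I have $\sum_{k=1}^q u_k \leq \sum_{k=1}^d u_k = s = \min(q,s)$. In both cases the required inequality holds, so $\bfu \in \PP_d^s$ and hence $\bfu \in \PP_d$.

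There is no real obstacle here: the hypotheses $\bfu \in [0,1]^d$ and $\|\bfu\|_1 = s$ align precisely with the two trivial bounds needed after appealing to the majorization criterion, and permutation invariance handles the sorting assumption. The only thing to be careful about is citing the correct direction of Lemma~\ref{lemma:majorize}/Theorem~\ref{theorem:major_hull} and noting explicitly that $s$ even is what places us inside $\PP_d$ (as opposed to merely inside the hypercube).
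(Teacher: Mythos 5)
Your proof is correct and follows essentially the same route as the paper: both establish that $\bfu$ is majorized by the sorted weight-$s$ binary vector and invoke Theorem~\ref{theorem:major_hull} to conclude $\bfu \in \PP_d^s \subseteq \PP_d$. You merely spell out the two majorization inequalities that the paper leaves as an immediate check.
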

\begin{proof}
Let $\sum_{k=1}^d u_k = s$. Since $\bfu$ is majorized by a sorted
binary vector of weight $s$ then, by Theorem~\ref{theorem:major_hull},
$\bfu \in \PP_d^s$ which, in turn, implies $\bfu \in \PP_d$.
\end{proof}

We conclude this section by showing that we can easily compute the
constituent parity of $\Pi_{\PP_d}(\bfv)$ without explicitly computing
the projection of $\bfv$.

\begin{lemma}
\label{lemma:const_parity}
For any vector $\bfv \in \mathbb{R}^d$, let $\bfz=
\Proj_{[0,1]^d}(\bfv)$, the projection of $\bfv$ onto $[0,1]^d$ and
denote by $\Proj_{\PP_d} (\bfv) $ the projection of $\bfv$ onto the
parity polytope.  Then
\begin{equation*}
\lfloor \| \bfz \|_1 \rfloor_{\textrm{even}} \leq \| \Proj_{\PP_d} (\bfv) \|_1 \leq
\lceil \| \bfz \|_1 \rceil_{\textrm{even}}\,.
\end{equation*}
\end{lemma}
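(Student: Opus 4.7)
My plan is to prove each inequality by contradiction, in each case constructing an alternative $\bfu'\in\PP_d$ that is strictly closer to $\bfv$ than $\bfu := \Proj_{\PP_d}(\bfv)$, thereby contradicting optimality. Corollary~\ref{corollary:even_containment} is the tool that certifies $\bfu'\in\PP_d$ by forcing $\|\bfu'\|_1$ to be an even integer.

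For the upper bound, suppose for contradiction that $\|\bfu\|_1 > s^{\ast} := \lceil \|\bfz\|_1 \rceil_{\textrm{even}}$; the case $s^{\ast} > d$ is vacuous since $\bfu \in [0,1]^d$. Set $D := \|\bfu\|_1 - s^{\ast} > 0$ and $S := \sum_i (u_i - z_i)^+$. The key feasibility inequality
\[
S \ \geq\ \sum_i (u_i - z_i) \ =\ \|\bfu\|_1 - \|\bfz\|_1 \ \geq\ \|\bfu\|_1 - s^{\ast} \ =\ D
\]
lets me define $\delta_i := D(u_i - z_i)^+ / S$, which satisfies $0 \leq \delta_i \leq (u_i - z_i)^+$ and $\sum_i \delta_i = D$. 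Then $\bfu' := \bfu - \mbf{\delta}$ lies in $[0,1]^d$ with $\|\bfu'\|_1 = s^{\ast}$ an even integer, so Corollary~\ref{corollary:even_containment} places $\bfu' \in \PP_d$.

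To establish $\|\bfv - \bfu'\|_2 < \|\bfv - \bfu\|_2$, I expand the difference of squared distances as $2\langle \bfv - \bfu, \mbf{\delta}\rangle + \|\mbf{\delta}\|_2^2$. The central technical claim is that on the support of $\mbf{\delta}$ (where $u_i > z_i$) one has $v_i - u_i \leq -(u_i - z_i)^+$. This falls out by casework on the location of $v_i$: if $v_i \in [0,1]$ then $z_i = v_i$ yields equality; if $v_i > 1$ then $z_i = 1 \geq u_i$ forces $\delta_i = 0$, so nothing to check; if $v_i < 0$ then $z_i = 0$ and $v_i < 0 \leq u_i$ give strict inequality. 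Because $\delta_i$ is proportional to $(u_i - z_i)^+$, both the cross term and $\|\mbf{\delta}\|_2^2$ become proportional to $T := \sum_i ((u_i - z_i)^+)^2$, and the sum telescopes to $(DT/S^2)(D - 2S)$, which is strictly negative since $S \geq D > 0$.

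The lower bound is handled by the mirror-image construction: assume $\|\bfu\|_1 < s_{\ast} := \lfloor \|\bfz\|_1 \rfloor_{\textrm{even}}$, set $\delta_i \propto (z_i - u_i)^+$ scaled so $\sum_i \delta_i = s_{\ast} - \|\bfu\|_1$, and take $\bfu' := \bfu + \mbf{\delta}$. The analogous three-case analysis yields $v_i - u_i \geq (z_i - u_i)^+$ on the support of $\mbf{\delta}$, and the same algebraic step produces the contradiction.

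The main obstacle is picking a single perturbation direction that simultaneously (i) lands in $\PP_d$ via Corollary~\ref{corollary:even_containment}, which forces $\|\bfu'\|_1$ to the nearest even integer, (ii) stays inside $[0,1]^d$, and (iii) has its quadratic cost $\|\mbf{\delta}\|_2^2$ dominated by the linear gain $2\langle \bfv - \bfu, \mbf{\delta}\rangle$. The proportional-to-overshoot scaling is the clean choice that makes all three work at once, and the key inequality $S \geq D$ derived from $\|\bfz\|_1 \leq s^{\ast}$ is exactly what guarantees the sign.
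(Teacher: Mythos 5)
Your proposal is correct and follows essentially the same strategy as the paper's proof: argue by contradiction, perturb the putative projection so its $\ell_1$ norm lands exactly on the even ceiling (resp.\ floor), invoke Corollary~\ref{corollary:even_containment} to certify membership in $\PP_d$, and use the fact that on the perturbed coordinates $v_i$ lies on the far side of $z_i$ to show the perturbed point is strictly closer. The only differences are cosmetic: your perturbation is an explicit proportional scaling of the overshoot $(u_i-z_i)^+$ and you verify the distance decrease by expanding the quadratic, whereas the paper reduces the offending coordinates existentially and compares distances coordinate-wise.
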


That is, we can compute the constituent parity of the projection of
$\bfv$ by projecting $\bfv$ onto $[0,1]^d$ and computing the even
floor.

\begin{proof}
Let $\rho_U= \lceil \| \bfz \|_1 \rceil_{\textrm{even}}$ and $
\rho_L = \lfloor \| \bfz \|_1 \rfloor_{\textrm{even}}$. We prove the
following fact: given any $\bfy' \in \PP_d$ with $\| \bfy' \|_1 > \rho_U$
there exists a vector $\bfy \in [0,1]^d$ such that $\| \bfy \|_1 = \rho_U$,
$\bfy \in \PP_d$, and $ \| \bfv - \bfy \|_2^2 < \| \bfv - \bfy'\|^2_2$.  The
implication of this fact will be that any vector in the parity
polytope with $\ell_1$ norm strictly greater that $\rho_U$ cannot be
the projection of $\bfv$.  Similarly we can also show that any vector
with $\ell_1$ norm strictly less than $\rho_L$ cannot be the
projection on the parity polytope.

First we construct the vector $\bfy$ based on $\bfy'$ and $\bfz$.  Define the
set of ``high'' values to be the coordinates on which $y'_i$ is
greater than $z_i$, i.e., $\mathcal{H} := \{ i \in [d] \mid y'_i >
z_i \}$.  Since by assumption $\|\bfy' \|_1 > \rho_U \geq \| \bfz
\|_1$ we know that $|\mathcal{H}| \geq 1 $. Consider the test vector
$\bft$ defined component-wise as
\begin{align*}
t_i = \begin{cases} z_i \textrm{ if } i \in \mathcal{H}, \\ y'_i
  \textrm{ otherwise. } \end{cases}
\end{align*}
Note that $\| \bft \|_1 \leq \| \bfz \|_1 \leq \rho_U < \| \bfy'
\|_1$. The vector $\bft$ differs from $\bfy'$ only in $\mathcal{H}$.
Thus, by changing (reducing) components of $\bfy'$ in the set
$\mathcal{H}$ we can obtain a vector $\bfy$ such that $\| \bfy \|_1 =
\rho_U$.  In particular there exists a vector $\bfy$ with $\| \bfy
\|_1 = \rho_U$ such that $ y'_i \geq y_i \geq z_i$ for $i \in
\mathcal{H}$ and $y_i = y'_i$ for $i \notin \mathcal{H}$. Since the
$\ell_1$ norm of $\bfy$ is even and it is in $[0,1]^d$ we have by
Corollary~\ref{corollary:even_containment} that $\bfy \in \PP_d$.

We next show that for all $i \in \mathcal{H}$, $ |v_i - y_i| \leq |
v_i - y'_i| $.  The inequality will be strict for at least one $i$
yielding $\| \bfv - \bfy \|_2^2 < \| \bfv - \bfy' \|_2^2 $ and thereby proving the
claim.  

We start by noting that $\bfy' \in \PP_d$ so $y'_i \in [0,1]$ for all
$i$. Hence, if $z_i < y'_i$ for some $i$ we must also have $z_i < 1$,
in which case $v_i \leq z_i$ since $z_i$ is the projection of $v_i$
onto $[0,1]$. In summary, $z_i <1$ iff $v_i <1$ and when $z_i < 1$
then $v_i \leq z_i$.  Therefore, if $y'_i > z_i$ then $z_i \geq
v_i$. Thus for all $i \in \mathcal{H}$ we get $ y'_i \geq y_i \geq z_i
\geq v_i$ where the first inequality is strict for at least one $i$.
Since $y_i = y'_i$ for $ i \notin \mathcal{H}$ this means that $|v_i -
y_i| \leq | v_i - y'_i|$ for all $i$ where the inequality is strict
for at least one value of $i$.  Overall, $\| \bfv - \bfy \|_2^2 < \|
\bfv - \bfy' \|_2^2$ and both $\bfy \in \PP_d$ (by construction) and
$\bfy' \in \PP_d$ (by assumption).  Thus, $\bfy'$ cannot be the
projection of $\bfv$ onto $\PP_d$.  Thus the $\ell_1$ norm of the
projection of $\bfv$, $\| \Proj_{\PP_d} (\bfv) \|_1 \leq \rho_U$.  A
similar argument shows that $\| \Proj_{\PP_d} (\bfv) \|_1 \geq \rho_L$
and so $\| \Proj_{\PP_d} (\bfv) \|_1$ must lie in $[\rho_L, \rho_U ]$
\end{proof}

\subsection{Projection Algorithm}
\label{sec.projection}

In this section we formulate a quadratic program (Problem PQP) for the projection problem
and then develop an algorithm (Algorithm~\ref{Algorithm:PPd}) that
efficiently solves the quadratic program.

Given a vector $\bfv \in \mathbb{R}^d$, set $r = \lfloor \|
\Pi_{[0,1]^d}(\bfv) \|_1 \rfloor_{\textrm{even}}$.  From Lemma
\ref{lemma:const_parity} we know that the constituent parity of
$\bfz^*:=\Proj_{\PP_d}(\bfv)$ is $r$.  We also know that if $\bfv$ is
sorted in descending order then $\bfz^*$ will also be sorted in
descending order.  Let $\bfS$ be a $(d-1) \times d$ matrix with
diagonal entries set to $1$, $\bfS_{i,i+1}=-1$ for $1 \leq i \leq
d-1$, and zero everywhere else:
\begin{equation*}
\bfS = \left[ \begin{array}{ccccccc} 1 & -1 & 0 & 0 & \ldots & 0 & 0\\
0 & 1 & -1 & 0 & \ldots & 0 & 0\\
0 & 0 & 1 & -1 & \ldots & 0 & 0\\
\vdots &  &  &  \ddots & \ddots &  & \vdots\\
0 & 0 & 0 & 0 & \ldots & -1 & 0\\
0 & 0 & 0 & 0 & \ldots & 1 & -1\\
\end{array} \right].
\end{equation*}
The constraint that $\bfz^*$ has to be sorted in
decreasing order can be stated as $\bfS \bfz^* \geq \bfzero$, where
$\bfzero$ is the all-zeros vector.

In addition, Lemma~\ref{lemma:two_slice}
implies that $\bfz^*$ is a convex combination of vectors
of Hamming weight $r$ and $r+2$. Using inequality
(\ref{eq:simple_majorize_bnd}) we get that a $d$-vector $\bfz \in
[0,1]^d$, with
\begin{equation}\label{eq:qp-norm}
\sum_{i=1}^d z_i = \alpha r + ( 1 - \alpha) (r+2),
\end{equation}
is a convex combination of vectors of weight $r$ and $r+2$ iff it
satisfies the following bounds:
\begin{align}
\sum_{k=1}^q z_{(k)} &\leq \alpha \min (q,r) + (1\!-\!\alpha)\min (q,r\!+\!2)
\ \ \forall \ \ 1\leq q<d \label{eq:two-slice},
\end{align}
where $z_{(k)}$ denotes the $k$th largest component of $\bfz$. As we
saw in the proof of Lemma~\ref{lemma:majorize}, the fact that the
components of $\bfz$ are no more than one implies that inequalities
(\ref{eq:two-slice}) are satisfied for all $ q \leq r$. Also,
(\ref{eq:qp-norm}) enforces the inequalities for $q \geq
r+2$. Therefore, inequalities in (\ref{eq:two-slice}) for $q \leq r$
and $q \geq r+2$ are redundant.  Note that in addition we can
eliminate the variable $\alpha$ by solving~\eq{qp-norm} giving $\alpha
= 1 + \frac{r - \sum_{k=1}^d z_k }{2}$ (see
also~(\ref{eq.choiceOfAlpha})). Therefore, for a sorted vector $\bfv$,
we can write the projection onto $\PP_d$ as the optimization problem
\begin{align}
\textrm{minimize} & \ \ \ \frac{1}{2} \| \bfv - \bfz \|_2^2 \nonumber
\\ \textrm{subject to} & \ \ \ 0 \leq z_i \leq 1 \ \ \forall \ i
\nonumber \\ 
& \ \ \ \bfS \bfz \geq 0 \nonumber\\
&
\ \ \ 0 \leq \ 1 + \frac{r - \sum_{k=1}^d z_k }{2} \ \leq
1 \label{eq:to-replace-alpha} \\ & \ \ \ \sum_{k=1}^{r+1} z_{k} \leq r
- \frac{r - \sum_{k=1}^d z_k }{2}.  \label{eq:to-def-lambda}
\end{align}

The last two constraints can be simplified as follows: First,
constraint (\ref{eq:to-replace-alpha}) simplifies to $ r \leq
\sum_{k=1}^d z_k \leq r+2$. Next, define the vector
\begin{align}\label{eq:fr-def}
\bff_r = ( \underbrace{1, 1, \ldots, 1}_{r+1} , \underbrace{-1,-1,\ldots,-1}_{d-r-1} )^T.
\end{align}
we can rewrite inequality (\ref{eq:to-def-lambda}) as $\bff_r^T\bfz
\leq r$.  Using these simplifications yields the final form of our quadratic
program:

{\em Problem PQP:}
\begin{align}
\textrm{minimize} & \ \ \ \frac{1}{2} \|\bfv - \bfz \|_2^2 \nonumber \\
\textrm{subject to} &  \ \ \  0 \leq z_i \leq 1 \ \ \forall \ i \label{eq:boxConstraint}\\  
& \ \ \ \bfS \bfz \geq \bfzero \label{eq:orderConstraint} \\
& \ \ \ r \leq \vct{1}^T\bfz  \leq r+2 \label{eq:constitParityConstraint} \\
& \ \ \    \bff_r^T \bfz \leq r.  \label{eq:Lambda}
\end{align}

The projection algorithm we develop efficiently solves the KKT
conditions of PQP.  The objective function is strongly convex and the
constraints are linear.  Hence, the KKT conditions are not only
necessary but also sufficient for optimality.  To formulate the KKT
conditions, we first construct the Lagrangian with dual variables
$\beta$, $\lagmu$, $\laggamma$, $\lagxi$, $\lagtheta$, and $\lagzeta$:
\begin{align*}
\Lag & = \frac{1}{2} \| \bfv - \bfz \|_2^2 - \beta \left(r -
\bff_r^T\bfz \right) - \lagmu^T ( \mathbf{1} - \bfz ) -
\laggamma^T \bfz \nonumber \\ & \ \ \ \ - \lagxi \left( r+2 -
\mathbf{1}^{T} \bfz \right) - \lagzeta( \mathbf{1}^{T} \bfz - r ) -
\lagtheta^T \bfS \bfz\,.
\end{align*}

The KKT conditions are then given by stationarity of the Lagrangian,
complementary slackness, and feasibility.
\begin{align} 
\bfz = \bfv - \beta \bff_r - \lagmu &+ \laggamma - ( \lagxi - \lagzeta)
\mathbf{1} + \bfS^T \lagtheta. \label{eq:kkt-grad}\\
 0 \leq \beta  \ \ \ & \perp \ \ \  \bff_r^T \bfz  - r \leq 0 \nonumber \\
\bfzero \leq \lagmu  \ \ \ & \perp \ \ \  \bfz \leq \mathbf{1} \nonumber \\
\bfzero \leq \laggamma \ \ \ & \perp \ \ \ \bfz \geq \bfzero \nonumber \\  
\bfzero \leq \lagtheta \ \ \ & \perp \ \ \ \bfS \bfz \geq \bfzero \nonumber \\
0 \leq \lagxi  \ \ \ & \perp \ \ \  \mathbf{1}^{T} \bfz - r - 2 \leq 0 \nonumber \\
0 \leq \lagzeta \ \ \ & \perp \ \ \ \mathbf{1}^{T} \bfz - r \geq 0. \nonumber 
\end{align}
A vector $\bfz$ that satisfies (\ref{eq:kkt-grad}) and the following orthogonality
conditions is equal to the projection of $\bfv$ onto $\PP_d$.

To proceed, set $\beta_{\textrm{max}} = \frac{1}{2}[v_{r+1} -
v_{r+2}]$ and define the parameterized vector
\begin{equation}
\bfz(\beta) := \Proj_{[0,1]^d} ( \bfv - \beta \bff_r) \label{def.zProj}\,.
\end{equation} 
The following lemma implies that the optimizer of PQP, i.e., $\bfz^* =
\Proj_{\PP_d}(\bfv)$, is $\bfz(\beta_{\textrm{opt}})$ for some
$\beta_{\textrm{opt}} \in [0, \beta_{\textrm{max}}]$.

\begin{lemma}
\label{lemma:optimality}
There exists a $\beta_{\textrm{opt}} \in [ 0 , \beta_{\textrm{max}}] $
such that $\bfz( \beta_{ \textrm{opt} } ) $ satisfies
the KKT conditions of the quadratic program PQP.
\end{lemma}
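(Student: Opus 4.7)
The plan is to exhibit a KKT point of the special form $\bfz = \bfz(\beta)$ with the auxiliary duals set to zero: $\lagtheta = \bfzero$ and $\lagxi = \lagzeta = 0$. Under this ansatz the stationarity equation~\eq{kkt-grad} collapses to $\bfz - (\bfv - \beta \bff_r) = -\lagmu + \laggamma$, which is exactly the first-order condition for projection of $\bfv - \beta \bff_r$ onto $[0,1]^d$. Consequently $\lagmu$ and $\laggamma$ can be read off componentwise from whether each coordinate of $\bfv - \beta \bff_r$ lies above $1$, inside $[0,1]$, or below $0$; their nonnegativity and the complementary slackness for the box constraints~\eq{boxConstraint} are then automatic. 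The task reduces to choosing $\beta = \beta_{\textrm{opt}} \in [0,\beta_{\max}]$ so that the remaining primal constraints~\eq{orderConstraint}--\eq{Lambda} and the complementary slackness $\beta \ge 0 \perp \bff_r^T \bfz - r \le 0$ are simultaneously met.

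The order constraint~\eq{orderConstraint} comes for free on $[0,\beta_{\max}]$: inside each of the index blocks $\{1,\ldots,r+1\}$ and $\{r+2,\ldots,d\}$ the uniform shift by $-\beta$ or $+\beta$ preserves the descending order of $\bfv$, while at the interface the inequality $v_{r+1} - \beta \ge v_{r+2} + \beta$ is precisely the condition $\beta \le \beta_{\max}$. Componentwise projection onto $[0,1]$ preserves order, so $\bfS \bfz(\beta) \ge \bfzero$ for every such $\beta$, which is consistent with $\lagtheta = \bfzero$.

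The heart of the argument is then the choice of $\beta$. I would introduce $g(\beta) := \bff_r^T \bfz(\beta)$ and note that it is continuous and non-increasing in $\beta$: the $r+1$ coordinates weighted positively by $\bff_r$ are non-increasing in $\beta$ and the remaining coordinates non-decreasing. If $g(0) \le r$, I set $\beta_{\textrm{opt}} = 0$; the definition $r = \lfloor \| \Proj_{[0,1]^d}(\bfv) \|_1 \rfloor_{\textrm{even}}$ gives $\mathbf{1}^T \bfz(0) = \|\Proj_{[0,1]^d}(\bfv)\|_1 \in [r,r+2]$, and the $\beta$-slackness is vacuous. The main obstacle is the remaining case $g(0) > r$, which I would handle via the intermediate value theorem once I establish the key claim $g(\beta_{\max}) \le r$. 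The crucial observation here is that at $\beta = \beta_{\max}$ the identity $v_{r+1} - \beta_{\max} = v_{r+2} + \beta_{\max}$ forces $z_{r+1}(\beta_{\max}) = z_{r+2}(\beta_{\max})$, so these two terms cancel in the sum defining $g(\beta_{\max})$, leaving $\sum_{i=1}^{r} z_i(\beta_{\max}) - \sum_{i=r+3}^{d} z_i(\beta_{\max}) \le r$ because each $z_i \in [0,1]$. Continuity then yields $\beta_{\textrm{opt}} \in (0,\beta_{\max}]$ with $g(\beta_{\textrm{opt}}) = r$. Finally, writing $S_2 := \sum_{i \ge r+2} z_i(\beta_{\textrm{opt}})$, the equality $\sum_{i \le r+1} z_i - S_2 = r$ combined with $\sum_{i \le r+1} z_i \le r+1$ forces $S_2 \in [0,1]$, hence $\mathbf{1}^T \bfz(\beta_{\textrm{opt}}) = r + 2 S_2 \in [r,r+2]$; the norm constraints~\eq{constitParityConstraint} then hold and all KKT conditions are verified.
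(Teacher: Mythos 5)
Your proposal is correct and follows essentially the same route as the paper's proof: parameterize the candidate as $\bfz(\beta)=\Proj_{[0,1]^d}(\bfv-\beta\bff_r)$, show $\bff_r^T\bfz(\beta)$ is continuous and non-increasing, locate $\beta_{\textrm{opt}}$ by the intermediate value theorem, verify $r\le \mathbf{1}^T\bfz\le r+2$ via the same bounding argument, and certify optimality with $\lagxi=\lagzeta=\lagtheta=0$ and the box-projection duals. In fact you supply two details the paper only asserts --- the explicit cancellation argument showing $\bff_r^T\bfz(\beta_{\max})\le r$, and the check that $\bfS\bfz(\beta)\ge\bfzero$ holds throughout $[0,\beta_{\max}]$ --- so no gaps.
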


\begin{proof}
Note that when $\beta > \beta_{\textrm{max}}$ we have that
$z_{r+1}(\beta)< z_{r+2}(\beta)$ and $\bfz(\beta)$ is ordered
differently from $\bfv$ and $\bff_r^T \bfz(\beta) < r$.  Consequently
$z(\beta)$ cannot be the projection onto $\PP_d$ for
$\beta>\beta_{\mathrm{max}}$.  At the other boundary of the interval,
when $\beta=0$ we have $\bfz(0) =\Proj_{[0,1]^d} (\bfv)$.  If
$\bff_r^T\bfz(0)=r$, then $\bfz(0)\in \PP_d$ by
Corollary~\ref{corollary:even_containment}.  But since $\bfz(0)$ is
the closest point in $[0,1]^d$ to $\bfv$, it must also be the closest
point in $\PP_d$.

Assume now that $\bff_r ^T\bfz(0) > r$.  Taking the directional
derivative with respect to $\beta$ increasing, we obtain the following:
\begin{align}
\frac{\partial \bff_r ^T \bfz (\beta) } {\partial \beta} & =
\bff_r^T \frac{\partial \bfz(\beta) }{ \partial \beta} \nonumber \\ & =
\sum_{k: \ 0 < z_k (\beta) < 1} -f_{r,k}^2 \nonumber \\ & = - \big| \{k \mid 1 \leq k \leq d,  0 <
z_k (\beta) < 1\} \big| \label{eq:derivative} \\
& < 0. \nonumber
\end{align}
proving that  $\bff_r ^T\bfz(\beta) $ is a decreasing function of $\beta$.
Therefore, by the mean value theorem, there exists a $\beta_{\textrm{opt}} \in [
  0 , \beta_{\textrm{max}}] $ such that
$\bff_{r}^T\bfz(\beta_{\textrm{opt}}) = r$.

First note that $\bfz(\beta_{\textrm{opt}})$ is feasible for Problem
PQP. We need only verify~(\ref{eq:constitParityConstraint}). Recalling that $r$ is
defined as $r = \lfloor \|
\Proj_{[0,1]^d}(\bfv)\|_1\rfloor_{\textrm{even}}$, we get the lower
bound:
\begin{align*}
\mathbf{1}^T \bfz(\beta_{\textrm{opt}}) \geq \bff_r^T\bfz(\beta_{\textrm{opt}})  = r.
\end{align*}
The components of $\bfz(\beta_{\textrm{opt}})$ are all less than one, so $\sum_{k=1}^{r+1}
z_k(\beta_{\textrm{opt}}) \leq r+1$. Combining this with the equality
$\bff_r^T \bfz(\beta_{\textrm{opt}})  = r$ tells us that
$\sum_{k=r+2}^d z_k(\beta_{\textrm{opt}}) 
\leq 1 $. We therefore find that $\mathbf{1}^T \bfz(\beta_{\textrm{opt}})$ is no more than
$r+2$.

To complete the proof, we need only find dual variables to certify the
optimality.  Setting $\lagxi$, $\lagzeta$, and $\lagtheta$ to zero, and $\lagmu$
and $\laggamma$ to the values required to satisfy~\eq{kkt-grad} provides the
necessary assignments to satisfy the KKT conditions.
\end{proof}

Lemma~\ref{lemma:optimality} thus certifies that all we need to do to
compute the projection is to compute the optimal $\beta$.  To do so,
we use the fact that the function $\bff_r^T \bfz(\beta)$ is a
piecewise linear function of $\beta$.  For a fixed $\beta$, define
the \emph{active set} to be the indices where $\bfz(\beta)$ is strictly
between $0$ and $1$ 
\begin{equation}
\mathcal{A}(\beta) := \{ k \mid 1\leq k \leq d, 0 < z_k(\beta) < 1 \}\,.
\label{eq.defActiveSet}
\end{equation}
Let the \emph{clipped set} be the indices where $\bfz(\beta)$ is
equal to $1$.
\begin{equation}
\mathcal{C}(\beta) := \{ k \mid 1\leq k \leq d,  z_k(\beta) = 1 \}\,.
\label{eq.defClippedSet}
\end{equation}
Let the \emph{zero set} be the indices where $\bfz(\beta)$ is equal to
zero
\begin{equation}
\mathcal{Z}(\beta) := \{ k \mid 1\leq k \leq d,  z_k(\beta) = 0 \}\,.
\label{eq.defZeroSet}
\end{equation}
Note that with these definitions, we have
\begin{align}
\bff_r^T \bfz(\beta) &= |\mathcal{C}(\beta)| + \sum_{j \in
  \mathcal{A}(\beta)} (z_j - \beta) \nonumber\\
&= |\mathcal{C}(\beta)| -\beta |\mathcal{A}(\beta)| + \sum_{j \in
  \mathcal{A}(\beta)} z_j \label{eq:find-beta}
\end{align}

Our algorithm simply increases $\beta$ until the active set changes,
keeping track of the sets $\mathcal{A}(\beta)$, $\mathcal{C}(\beta)$,
and $\mathcal{Z}(\beta)$.  We break the interval
$[0,\beta_{\mathrm{max}}]$ into the locations where the active set
changes, and compute the value of $\bff_r^T \bfz(\beta)$ at each of
these breakpoints until $\bff_r^T \bfz(\beta)<r$.  At this point, we
have located the appropriate active set for optimality and can find
$\beta_{\mathrm{opt}}$ by solving the linear equation~\eq{find-beta}.

The breakpoints themselves are easy to find: they are the values of
$\beta$ where an index is set equal to one or equal to zero.  First,
define the following sets
\begin{align*}
\mcE_1 & := \{ v_i - 1 \mid 1 \leq i \leq r+1 \},  \\
\mcL_1 & := \{v_i \mid 1 \leq i \leq r+1 \}, \\ 
\mcE_2 & := \{ - v_i \mid r+2  \leq i \leq d  \}, \\
\mcL_2 & := \{ - v_i +1  \mid r+2  \leq i \leq d \}. 
\end{align*}
The sets $\mcE_1$ and $\mcL_1$ concern the $r+1$ largest
components of $\bfv$; $\mcE_2$ and $\mcL_2$ the smallest
components. 
The set of possible breakpoints is
\begin{align*}
\mcB := \left\{  \left.\beta \in \mcE_1 \cup \mcE_2 \cup \mcL_1 \cup\mcL_2 \right| 0
\leq \beta \leq \beta_{\textrm{max}} \right\} \cup \{0,
\beta_{\textrm{max}} \}.
\end{align*}

\begin{figure}[h!]
\psfrag{&A}{$\bff_r^T \bfz(\beta)$}
\psfrag{&B}{$r$}
\psfrag{&C}{$0$}
\psfrag{&D}{$\beta_{i-1}$}
\psfrag{&E}{ $\beta_{\textrm{opt}}$ } 
\psfrag{&F}{$\beta_{i+1}$}

\psfrag{&G}{$\beta_{\textrm{max}}$}
\psfrag{&H}{$\beta$}
\psfrag{&I}{} 
\psfrag{&J}{}   \centering
  \includegraphics[width=3.5in]{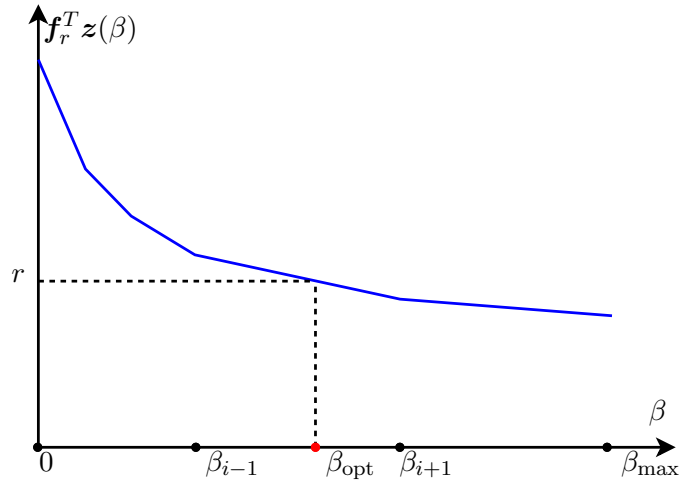}
\caption{Since there are a finite number of breakpoints (at most $2d+2$) and the function $ \bff_r^T\bfz(\beta)$ is linear between breakpoints, we can solve for $\beta_{\textrm{opt}}$ in linear time.  See (\ref{eq:fr-def}) and (\ref{def.zProj}) for definitions of $\bff_r$ and $\bfz(\beta)$ respectively.}
\label{fig:breakpoints}
\end{figure}

The following proposition reduces the search space by identifying unnecessary breakpoints.
\begin{proposition}
$\beta_{\textrm{opt}} \leq v_{r+1}$ and $\beta_{\textrm{opt}} < 1 - v_{r+2}$.
\end{proposition}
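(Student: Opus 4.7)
My plan is to use the monotonicity already established in the proof of Lemma~\ref{lemma:optimality}: the derivative computation~(\ref{eq:derivative}) shows that $\bff_r^T \bfz(\beta)$ is a non-increasing function of $\beta$. Combined with the characterization $\bff_r^T\bfz(\beta_{\textrm{opt}}) = r$ from the same lemma, it suffices to exhibit a $\beta$ at which $\bff_r^T\bfz(\beta) \leq r$ in order to conclude that $\beta_{\textrm{opt}}\leq \beta$. I would then apply this template to the two candidate values $v_{r+1}$ and $1-v_{r+2}$.

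For the first inequality, I would evaluate at $\beta=v_{r+1}$. Because $\bfv$ is sorted in descending order, $v_i-v_{r+1}\ge 0$ for all $i\leq r+1$, so $z_i(v_{r+1}) = \Proj_{[0,1]}(v_i-v_{r+1})\in[0,1]$, and in particular $z_{r+1}(v_{r+1})=0$. Hence $\sum_{i=1}^{r+1} z_i(v_{r+1}) \leq r$. Combined with $z_i(v_{r+1})\geq 0$ for $i\geq r+2$, this gives $\bff_r^T\bfz(v_{r+1}) \leq r$, and monotonicity delivers $\beta_{\textrm{opt}}\leq v_{r+1}$.

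For the second inequality I would first argue $v_{r+2}<1$, so that $1-v_{r+2}>0$ is a meaningful upper bound. Indeed, if $v_{r+2}\geq 1$, the first $r+2$ coordinates of $\Proj_{[0,1]^d}(\bfv)$ would each equal $1$, forcing $\|\Proj_{[0,1]^d}(\bfv)\|_1\geq r+2$ and contradicting $r=\lfloor \|\Proj_{[0,1]^d}(\bfv)\|_1\rfloor_{\textrm{even}}$. Evaluating at $\beta=1-v_{r+2}$, one has
\begin{equation*}
z_{r+2}(1-v_{r+2}) \;=\; \Proj_{[0,1]}\!\bigl(v_{r+2}+(1-v_{r+2})\bigr) \;=\; 1,
\end{equation*}
while $z_i\leq 1$ for $i\leq r+1$ and $z_i\geq 0$ for $i\geq r+3$. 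Therefore
\begin{equation*}
\bff_r^T\bfz(1-v_{r+2}) \;\leq\; (r+1)\;-\;1\;-\;0 \;=\; r.
\end{equation*}
Strict inequality in this bound follows whenever some $z_i$ in the first block is below $1$ or some $z_i$ in the block $i\geq r+3$ is strictly positive, which is the generic situation; in either case monotonicity yields $\beta_{\textrm{opt}} < 1 - v_{r+2}$.

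The main obstacle I expect is the non-trivial step of verifying $v_{r+2}<1$, which draws on the even-floor definition of $r$ and the explicit form of the box projection. Once that is in place, the remainder is a direct evaluation of $\bff_r^T\bfz(\cdot)$ at the two candidate values together with the monotonicity inherited from Lemma~\ref{lemma:optimality}.
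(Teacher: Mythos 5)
Your handling of the first inequality is correct and, if anything, cleaner than the paper's: the paper splits into the cases $v_{r+1} > -v_{r+2}$ (where $v_{r+1}$ already exceeds $\beta_{\textrm{max}}$, so Lemma~\ref{lemma:optimality} finishes the job) and $v_{r+1} \leq -v_{r+2}$ (where it evaluates $\bff_r^T\bfz(\hat{\beta})$ for $\hat{\beta} \geq v_{r+1}$), whereas your single observation that $z_{r+1}(v_{r+1})=0$ forces $\bff_r^T\bfz(v_{r+1})\leq r$, combined with the monotonicity from~(\ref{eq:derivative}), covers both cases at once. Your preliminary step $v_{r+2}<1$ is also correct and is a point the paper does not even mention. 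For the second inequality the paper argues by a different mechanism --- it claims $\|\bfz(\hat{\beta})\|_1 \geq r+2$ for $\hat{\beta}\geq 1-v_{r+2}$, so $\bfz(\hat{\beta})$ cannot be the projection --- while you again go through $\bff_r^T\bfz(\hat{\beta})\leq r$; either route delivers the \emph{non-strict} bound.

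The genuine gap is the strictness of $\beta_{\textrm{opt}} < 1-v_{r+2}$. Your estimate $\bff_r^T\bfz(1-v_{r+2}) \leq (r+1)-1-0 = r$ is tight precisely when $z_i(1-v_{r+2})=1$ for all $i\leq r+1$ and $z_i(1-v_{r+2})=0$ for all $i\geq r+3$, and dismissing this as non-generic is not a proof: the case does occur. Take $d=4$ and $\bfv=(1.5,1.5,1.5,0.5)$. Then $\Proj_{[0,1]^4}(\bfv)=(1,1,1,0.5)$ has $\ell_1$ norm $3.5$, so $r=2$, $\bff_2=(1,1,1,-1)$, $\beta_{\textrm{max}}=0.5$, and $\bfz(\beta)=(1,1,1,0.5+\beta)$ with $\bff_2^T\bfz(\beta)=2.5-\beta$ on $[0,0.5]$; the unique root of $\bff_2^T\bfz(\beta)=r$ is $\beta_{\textrm{opt}}=0.5=1-v_{r+2}$, and the projection is the vertex $(1,1,1,1)$. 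So the strict inequality cannot be salvaged in general --- the most your argument (or any argument) can give is $\beta_{\textrm{opt}}\leq 1-v_{r+2}$, which is in fact all the algorithm needs, since $\bff_r^T\bfz$ is linear on the final segment and $\beta_{\textrm{opt}}$ is recovered by solving~(\ref{eq:find-beta}) there. You should either prove the weaker $\leq$ and note that it suffices for discarding the breakpoints in $\mcL_1\cup\mcL_2$, or explicitly exclude the boundary configuration; as written, the stated claim is left unproved.
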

\begin{proof}
If $v_{r+1} > -v_{r+2}$, then $v_{r+1} > \frac{1}{2}(v_{r+1} - v_{r+2}) = \beta_{\textrm{max}}$. Therefore $\beta_{\textrm{opt}} < \beta_{\textrm{max}} < v_{r+1}$. If $v_{r+1} \leq -v_{r+2}$, then for all $\hat{\beta} \geq v_{r+1}$, we have $\bff_r^Tz(\hat{\beta})\leq r$. This means that for all $\hat{\beta} > v_{r+1}$, $\bff_r^Tz(\hat{\beta})< r$ and hence $\hat{\beta}$ cannot be $\beta_{\textrm{opt}}$. 

In addition, $\beta_{\textrm{opt}} <1 - v_{r+2}$. This is because for all $\hat{\beta} \geq 1- r_{r+2}$, we have $\|z(\hat{\beta})\|_1 \geq r+2$. Therefore $z(\hat{\beta})$ cannot be the projection.
\end{proof}
The proposition above indicates that the points in $\mcL_1$ and $\mcL_2$ are not necessary when identifying $\beta_{\textrm{opt}}$. The reason is that the only possible solution in these two sets is $\beta_{\textrm{opt}} = v_{r+1}\in\mcL_1$. However, since $v_{r+1}$ is the smallest breakpoint, we can directly solve for $\beta_{\textrm{opt}}$ by inspecting the second smallest breakpoint and use~\eq{find-beta}. This procedure is captured in Algorithm~\ref{Algorithm:PPd}. We use $\mathcal{B} = \left\{ \beta \in \cup_{j=1}^2 \mathcal{E}_j   \mid 0 \leq \beta \leq \beta_{\textrm{max}}  \right\} $ as the set of breakpoints in Algorithm~\ref{Algorithm:PPd}. This set contains at most $d$ points. 

To summarize, our Algorithm~\ref{Algorithm:PPd} sorts the input
vector, computes the set of breakpoints, and then marches through the
breakpoints until it finds a value of $\beta_i\in\mathcal{B}$ with
$\bff_r^T \bfz(\beta_i)\leq r$.  Since we will also have $\bff_r^T
\bfz(\beta_{i-1})> r$, the optimal $\beta$ will lie in
$[\beta_{i-1},\beta_i]$ and can be found by solving~\eq{find-beta}.
In the algorithm box for Algorithm~\ref{Algorithm:PPd}, $b$ is the
largest and $a$ is the smallest index in the active set.  We use $V$
to denote the sum of the elements in the active set and $\Lambda$ the
total sum of the vector at the current break point.  Some of the
awkward {\bf if} statements in the main {\bf for} loop take care of
the cases when the input vector has many repeated entries.

Algorithm~\ref{Algorithm:PPd} requires one sort (sorting the input vector), 
and inspections of at most $d$
breakpoints.  Thus, the total complexity of the algorithm is linear
plus the time for the sort, which is $O(d\log d)$. We make two remarks on 
the computational complexity of Algorithm~\ref{Algorithm:PPd}. First, $d$ 
is small for LDPC codes. Thus the asymptotic complexity is less important. 
The complexity from the sorting operation is negligible. 
Second, we can consider Algorithm~\ref{Algorithm:PPd} as the check node 
operation. Compared with an exact sum-product BP check node update, where hyperbolic 
tangents and logarithms are used, our algorithm uses only basic arithmetic 
operations and thus is more hardware friendly. We demonstrate in numerical
results that the actual run time for each iteration of ADMM decoding is similar to
that for BP decoding.

\begin{algorithm}
\caption{Given $\bfu \in \mathbb{R}^d$ determine its projection on
  $\PP_d$, $\bfz^*$}
\label{Algorithm:PPd}
\begin{algorithmic}[1]

\STATE Permute $\bfu$ to produce a vector $\bfv$ whose components
are sorted in decreasing order, i.e., $v_1 \geq v_2 \geq \ldots \geq
v_d$.  Let $\bfQ$ be the corresponding permutation matrix, i.e., $\bfv = \bfQ \bfu$.
\STATE Compute $ \bfzHat \leftarrow \Proj_{[0,1]^d} (\bfv) $.
\STATE Assign $r = \lfloor \|\bfzHat\|_1 \rfloor_{\textrm{even}} $ 
\IF{$r = d$}
\STATE \textbf{return} $\bfz^*=\bfzHat$.
\ENDIF

\IF{$r\leq d-2$}
\STATE $\beta_{\textrm{max}} = \frac{1}{2} [v_{r+1} - v_{r+2}]$.
\ELSE 
\STATE $\beta_{\textrm{max}} = v_{r+1}$.
\ENDIF

\STATE Define $\bff_r$ as in~\eq{fr-def}.

\IF{ $ \bff_{r} ^T\bfzHat  \leq r$ }
\STATE \textbf{return} $\bfz^*=\bfzHat$.
\ENDIF

\STATE \begin{tabbing} Assign \= $ \mathcal{E}_1 = \{ v_i - 1 \mid 1
  \leq i \leq r+1 \}$, \\ \> $ \mathcal{E}_2 = \{ - v_i \mid r+2 \leq i \leq d \}$.
 \end{tabbing}
\STATE Construct the set of breakpoints \\
$\mathcal{B} := \left\{ \beta \in \mathcal{E}_1\cup\mathcal{E}_2 \mid 0 \leq \beta \leq \beta_{\textrm{max}}  \right\} $ \\
by merging the sorted points in $\mathcal{E}_1$ and $\mathcal{E}_2$ so that entries in $\mcB$
satisfy $\beta_1 \leq \beta_2 \leq \ldots \leq \beta_{|\mathcal{B}|}$.
\STATE Initialize $a$ as the smallest index such that $0 < \hat{z}_a < 1$. 
\STATE Initialize $b$ as the largest index such that $0 < \hat{z}_b < 1$.
\STATE Initialize sum $V = \bff_r^T \bfzHat$. \\

\FOR{ $i = 1$ to $|\mathcal{B}|$ }

\STATE Set $\beta_0 \leftarrow \beta_i$.
\IF{ $ \beta_i \in \mathcal{E}_1 $}
\STATE Update $ a \leftarrow a-1$. 
\STATE Update $V \leftarrow V + v_a $.  
\ELSE
\STATE  Update $b \leftarrow b+1 $
\STATE Update $V \leftarrow V - v_b$. 
\ENDIF 

\IF{$i<|\mcB|$ and $\beta_i \neq \beta_{i+1}$}
   \STATE \label{Step:using-V} $\Lambda \leftarrow (a-1) + V - \beta_0 ( b-a +1 ) $ 
   \STATE {\bf if} { $ \Lambda \leq r$ } {\bf then break}
\ELSIF {$i=|\mcB|$}
   \STATE $\Lambda \leftarrow (a-1) + V - \beta_0 ( b-a +1 ) $
\ENDIF

\ENDFOR

\IF{$\Lambda>r$}
\STATE \label{Step:compute-beta-opt} Compute $\beta_{\textrm{opt}} \leftarrow
\beta_0 - \frac{ r -
  \Lambda } { b-a +1 }$.  
\ELSE
\STATE $\beta_0\leftarrow \beta_{i-1}$
\STATE $a \leftarrow |\{j~|~v_j-\beta_0>1\}|$
\STATE $b \leftarrow r+2 + |\{j~|~v_j+\beta_0 0\}|$
\STATE $V\leftarrow \sum_{j=a-1}^{r+1} v_j - \sum_{j=r+2}^{b+1} v_j$
\STATE $\beta_{\mathrm{opt}} \leftarrow  \frac{V+b-r}{b-a+1}$
\ENDIF
\STATE \textbf{return} $\bfz^*=\bfQ^T \Pi_{[0,1]^d}(\bfv - \beta_{\textrm{opt}} \bff_r)$. 
\end{algorithmic}
\end{algorithm}
\section{Numerical results and implementation}
\label{sec.simulations}
In this section, we present simulation results for the ADMM decoder
and discuss various aspects of our implementation.  In
Section~\ref{sec.simRes} we present word-error-rate (WER) results for
two particular LDPC codes as well as for an ensemble of random
$(3,6)$-regular LDPC codes.  We note that numerical results for the
$(155,64)$ Tanner code~\cite{tanner:01} is reported in our previous
work~\cite{barmanEtAl:allerton11}.  In Section~\ref{sec.paramChoices}
we discuss how the various parameters choices in ADMM affect decoding
performance, as measured by error rate and by decoding time.

\subsection{Performance comparisons between ADMM and BP decoding}
\label{sec.simRes}

In this section, we present simulation results of ADMM decoding and
compare to sum-product BP decoding. The parameters used for ADMM are
as follows: (i) error tolerance $\epsilon = 10^{-5}$, (ii) penalty
$\mu = 3$, (iii) maximum number of iterations $t_{\max} = 1000$ and
(iv) over-relaxation parameter (cf. Sec.\ref{sec.paramChoices}) $\rho
= 1.9$. The maximum number of iterations for BP decoding is also
$1000$. We discuss parameter choices in detail in
Sec.~\ref{sec.paramChoices}.

We first present results for two particular codes over the additive
white Gaussian noise (AWGN) channel with binary inputs.  The first
code is the $[2640,1320]$ rate-$0.5$, $(3,6)$-regular Margulis LDPC
code~\cite{ryanLin:2009channel}.  The second is a $[1057,813]$
rate-$0.77$, $(3,13)$-regular LDPC code obtained
from~\cite{mackayWeb}.  This code is also studied by Yedidia \emph{et
  al.}~\cite{yedidiaWangDraper:IT11}.  We choose both codes as they
have been chosen in the past to study error floor performance.  Then
we present results for an ensemble of $100$ randomly generated
$(3,6)$-regular LDPC codes of length $1002$ similar
to~\cite{burshtein:IT11}.  We simulate the binary symmetric channel
(BSC) and compare ADMM with BP decoding in three aspects: error rate,
number of iterations and execution time.\footnote{We note that results
  for ADMM decoding of the $(155,64)$ Tanner code are given
  in~\cite{barmanEtAl:allerton11}, results that match those given
  in~\cite{draperYedidiaWang:ISIT07}.}

In Fig.~\ref{fig:AWGN_Margulis} we plot the WER performance of the
Margulis code for the ADMM decoder and various implementations of
sum-product BP decoding. As mentioned, this code has been extensively
studied in the literature due to its error floor behavior (see,
e.g.,~\cite{mackayPostol:03, ryanLin:2009channel,
  butlerSiegel:allerton11}). Recently it has been
noted~\cite{butlerSiegel:allerton11,butlerSiegel:preprint12} that the
previously observed error floor of this code is, at least partially, a
result of saturation in the message LLRs passed by the BP decoder.
This issue of implementation can be greatly mitigated by improving the
way large LLRs are handled.  Thus, alongside these previous results we
plot results of our own implementation of ``non-saturating''
sum-product BP, which follows the implementation
of~\cite{butlerSiegel:allerton11,butlerSiegel:preprint12}, and which
matches the results reported therein.  In our simulations of the ADMM
decoder, we collect more than 200 errors for all data points other than
the 2 highest SNRs ($2.8$ dB and $3$ dB), for which we collected $130$
and $32$ errors respectively. For non-saturating BP decoding, we
collect $43$ and $13$ respective errors at SNR $= 2.6,2.7$ dB.

\begin{figure}
\psfrag{&A}{\tiny{ADMM}}
\psfrag{&B}{\tiny{BP decoding (Ryan and Lin)}}
\psfrag{&C}{\tiny{BP decoding (Mackay)}}
\psfrag{&D}{\tiny{Non-saturating BP}}
\psfrag{&E}{\hspace*{-1em}\tiny{$E_b/N_0$ (dB)}}
\psfrag{&F}{\hspace*{-3em}\tiny{word-error-rate (WER)}}
    \begin{center}
    \includegraphics[width=9.25cm]{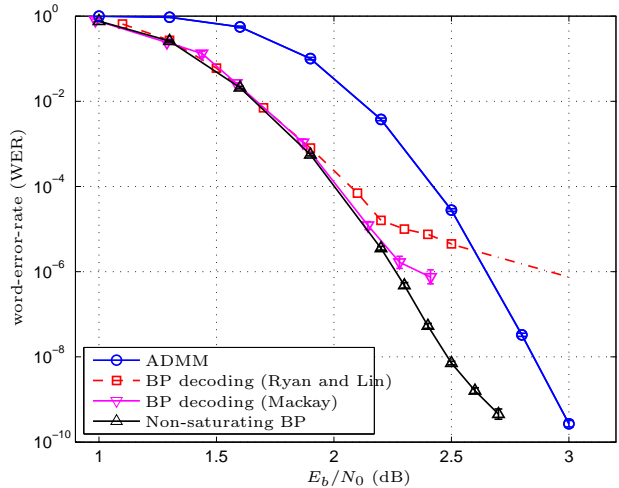}
    \end{center}
    \caption{Word error rate (WER) of the $[2540,1320]$ ``Margulis''
      LDPC code used on the AWGN channel plotted as a function of
      signal-to-noise ratio (SNR).  The WER performance of ADMM is
      compared to that of non-saturating sum-product BP, as well as to
      results for (saturating) sum-product BP from Ryan and
      Lin~\cite{ryanLin:2009channel} and from MacKay and
      Postol~\cite{mackayPostol:03}.}
      \label{fig:AWGN_Margulis}
\end{figure}

The first aspect to note is that while the LP decoder has a waterfall,
the waterfall initiates at a slightly higher SNR (about $0.4$ dB
higher in this example) than that of sum-product BP.  This observation
is consistent with earlier simulations of LP decoding for long block
lengths, e.g., those presented in~\cite{yedidiaWangDraper:IT11,
  wangYedidiaDraper:ISIT09}.  It is worth mentioning that expressing
BP decoding as optimization over the Bethe free energy is introduced
in~\cite{yedidiaFreemanWeiss:IT05}.  Further studies such
as~\cite{vontobel:10preprint} show that BP and LP decoding, when
expressed using the Bethe free energy, are different in the objective
function. Therefore, one should not expect identical performance, as
the simulations demonstrate.

The second aspect to note is that, as in the prior work, we do not
observe an error floor in ADMM decoding at WERs above $10^{-10}$.
When decoding of this code using the non-saturating version of
sum-product, we observe a weak error floor at WERs near $10^{-9}$, in
which regime the waterfall of ADMM is continuing to steepen.  In this
regime we found that the non-saturating BP decoder is oscillating, as
discussed in~\cite{zhangEtAl:tcomm09}
\cite{ruozziThalerTatikonda:allerton09}.  We note that we have not
simulated WERs at $10^{-10}$ or lower due to the limitation of our
computational resources.  It would be extremely interesting to see the
performance of ADMM decoding at WERs lower than $10^{-10}$.

Figure~\ref{fig:AWGN_1057} presents simulation results for the
rate-$0.77$ length-$1057$ code.  In this simulation, all data points
are based on more than $200$ errors except for the ADMM data at SNR =
$5$ dB, where $29$ errors are observed. In addition we plot an
estimated lower bound on maximum likelihood (ML) decoding
performances.  The lower bound is estimated in the following way. In
the ADMM decoding simulations we round any non-integer solution
obtained from the ADMM decoder to produce a codeword estimate. If the
decoder produces a decoding error, i.e., if the estimate does not
match the transmitted codeword, we check if the estimate is a valid
codeword.  If the estimate satisfies all the parity checks (and is
therefore a codeword) we also compare the probability of the estimate
given the channel observations with the that of the transmitted
codeword given the channel observations.  If the probability of
estimate is greater than that of the transmitted codeword we know that
an ML decoder would also be in error.  All other events are counted as
ML successes (hence the estimated {\em lower} bound on ML
performance).  Similar to the Margulis code,
Fig.~\ref{fig:AWGN_1057} shows that for this code the ADMM decoder
displays no signs of an error floor, while the BP decoder does.
Further, ADMM is approaching the ML error lower bound at high SNRs.

\begin{figure}
\psfrag{&A}{\tiny{ADMM}}
\psfrag{&B}{\tiny{Non-saturating BP}}
\psfrag{&C}{\tiny{ML lower bound}}
\psfrag{&D}{\hspace*{-1em}\tiny{$E_b/N_0$ (dB)}}
\psfrag{&E}{\hspace*{-3em}\tiny{word-error-rate (WER)}}
    \begin{center}
    \includegraphics[width=9.25cm]{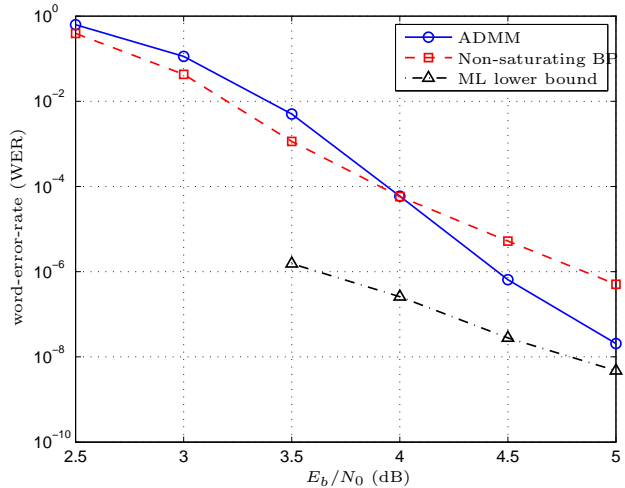}
    \end{center}
    \caption{Word error rate (WER) of the $[1057,813]$ LDPC code used
      on the AWGN channel plotted as a function of signal-to-noise
      ratio (SNR).  The WER performance of ADMM is compared to that of
      non-saturating sum-product BP, as well as to an estimated
      lower-bound on ML decoding.}
      \label{fig:AWGN_1057}
\end{figure}

In Fig.~\ref{fig:RandomCodeErrorRate},~\ref{fig:RandomCode_BSC_iter}
and~\ref{fig:RandomCode_BSC_time}, we present comparisons between ADMM
decoding and BP decoding using an ensemble of $100$ randomly generated
$(3,6)$-regular LDPC codes of length $1002$. We eliminated codes that
have parallel edges, thus all codes have girth of at least four.
However, cycles of length four or greater are not eliminated.  We will
use this ensemble to understand the error performance and the
computational performance of LP and of BP decoding.  For this study we
simulate the BSC in order to match the settings used
in~\cite{burshtein:IT11}.  All data points presented are averaged
across the 100 codes in the ensemble. For each code, we collect more
than $5$ word-errors. 

In Fig.~\ref{fig:RandomCodeErrorRate} we plot the average
word-error-rate (WER) and bit-error-rate (BER) observed for both BP
and ADMM decoding.  We observe similar comparisons between ADMM and BP
decoding found in previous examples.  In particular, note the
error floor flare observable in BP at cross-over probabilities of
about $0.045$ and below.  No such flare is evident in ADMM.

In Fig.~\ref{fig:RandomCode_BSC_iter} we plot a comparison of the
iteration requirements of ADMM and BP decoding for the same ensemble
of codes. We plot three curves for each decoder: the average number of
iterations required to decode, the average number of iterations
required to decode when decoding is correct, and the average number
required when decoding is erroneous.  We observe that ADMM decoding
needs more iterations to decode than BP does. However, the gap between
the decoders is roughly constant (on this log scale) meaning the ratio
of iterations required is roughly constant.  Thus, the trend for
increased iterations at higher crossovers is the same for both
decoders.  Further, both decoder reach the maximum number of allowable
iterations when errors occur. An important observations is that
although we allow up to $1000$ iterations in our simulations, the
average number of iterations required by ADMM for correct decoding
events is quite small at \emph{all} SNRs.  This means that ADMM
converges quickly to a correct codeword, but more slowly to a
pseudocodeword.  We discuss further the effect of choice of the
maximum number of iterations in Sec.~\ref{sec.paramChoices}.

In Fig.~\ref{fig:RandomCode_BSC_time} we plot the time comparisons
between ADMM and BP decoding using the same methodology. For this
figure we plot results for the {\em saturating} version of BP where we
have tried to optimized our implementations.  This decoder executes
{\em much} more quickly than our implementation of non-saturating BP.
Both decoders are simulated on the same CPU configurations.  We make
two observations.  First, when measured in terms of execution time,
the computational complexity of ADMM and BP are similar.  This
observation holds for all crossover probabilities simulated. Second,
ADMM decoding is faster than BP when decoding is correct. Combining
these results with those on iteration count from
Fig.~\ref{fig:RandomCode_BSC_iter} we conclude that the execution time
for each iteration of ADMM is shorter than for BP.

\begin{figure}
\psfrag{&A}{\tiny{ADMM, WER}}
\psfrag{&B}{\tiny{BP, WER}}
\psfrag{&C}{\tiny{ADMM, BER}}
\psfrag{&D}{\tiny{BP, BER}}
\psfrag{&E}{\hspace*{-2em}\tiny{crossover probability}}
\psfrag{&F}{\hspace*{-1em}\tiny{error rate}}
    \begin{center}
    \includegraphics[width=9.25cm]{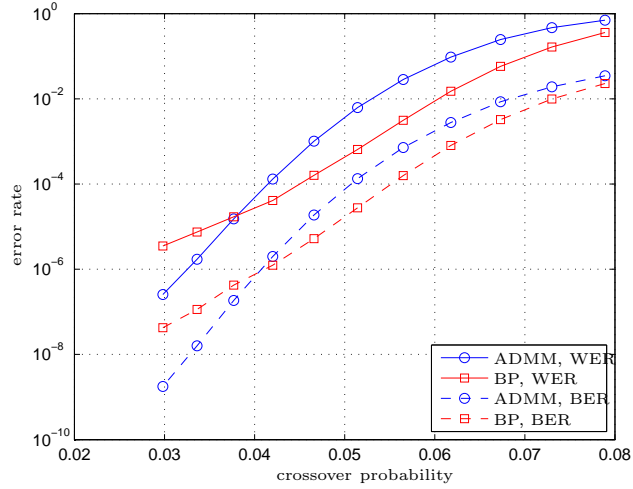}
    \end{center}
    \caption{Word error rate (WER) and bit-error-rate (BER) of the $(3,6)$-regular random
      LDPC code used on the BSC plotted as a function of
      crossover probability.  The error rate performance of ADMM is
      compared to that of saturating sum-product BP. Results are averaged
      over 100 randomly generated codes.}
      \label{fig:RandomCodeErrorRate}
\end{figure}
\begin{figure}
\psfrag{&A}{\hspace*{-2em}\tiny{crossover probability}}
\psfrag{&B}{\hspace*{-2em}\tiny{\# of iterations}}
\psfrag{&C}{\tiny{ADMM, erroneous}}
\psfrag{&D}{\tiny{BP, erroneous}}
\psfrag{&E}{\tiny{ADMM, average}}
\psfrag{&F}{\tiny{BP, average}}
\psfrag{&G}{\tiny{ADMM, correct}}
\psfrag{&H}{\tiny{BP, correct}}
    \begin{center}
    \includegraphics[width=9.25cm]{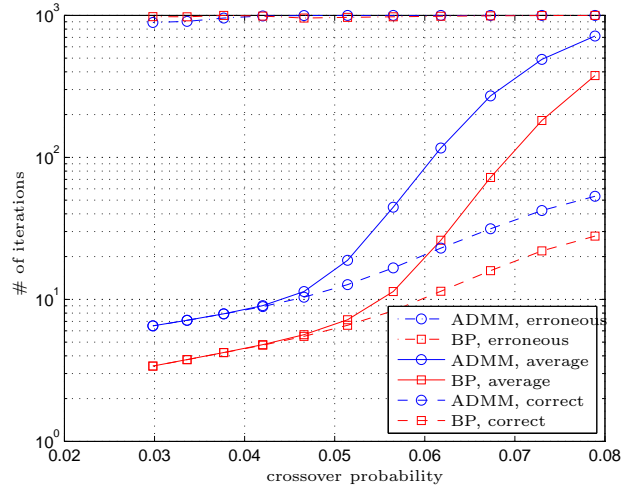}
    \end{center}
    \caption{Number of iterations of the $(3,6)$-regular random
      LDPC code used on the BSC plotted as a function of
      crossover probability.  The number of iterations of ADMM is
      compared to that of saturating sum-product BP. Results are averaged
      over 100 randomly generated codes.}
      \label{fig:RandomCode_BSC_iter}
\end{figure}
\begin{figure}
\psfrag{&A}{\hspace*{-2em}\tiny{crossover probability}}
\psfrag{&B}{\hspace*{-2em}\tiny{execution time (sec)}}
\psfrag{&C}{\tiny{ADMM, erroneous}}
\psfrag{&D}{\tiny{BP, erroneous}}
\psfrag{&E}{\tiny{ADMM, average}}
\psfrag{&F}{\tiny{BP, average}}
\psfrag{&G}{\tiny{ADMM, correct}}
\psfrag{&H}{\tiny{BP, correct}}
    \begin{center}
    \includegraphics[width=9.25cm]{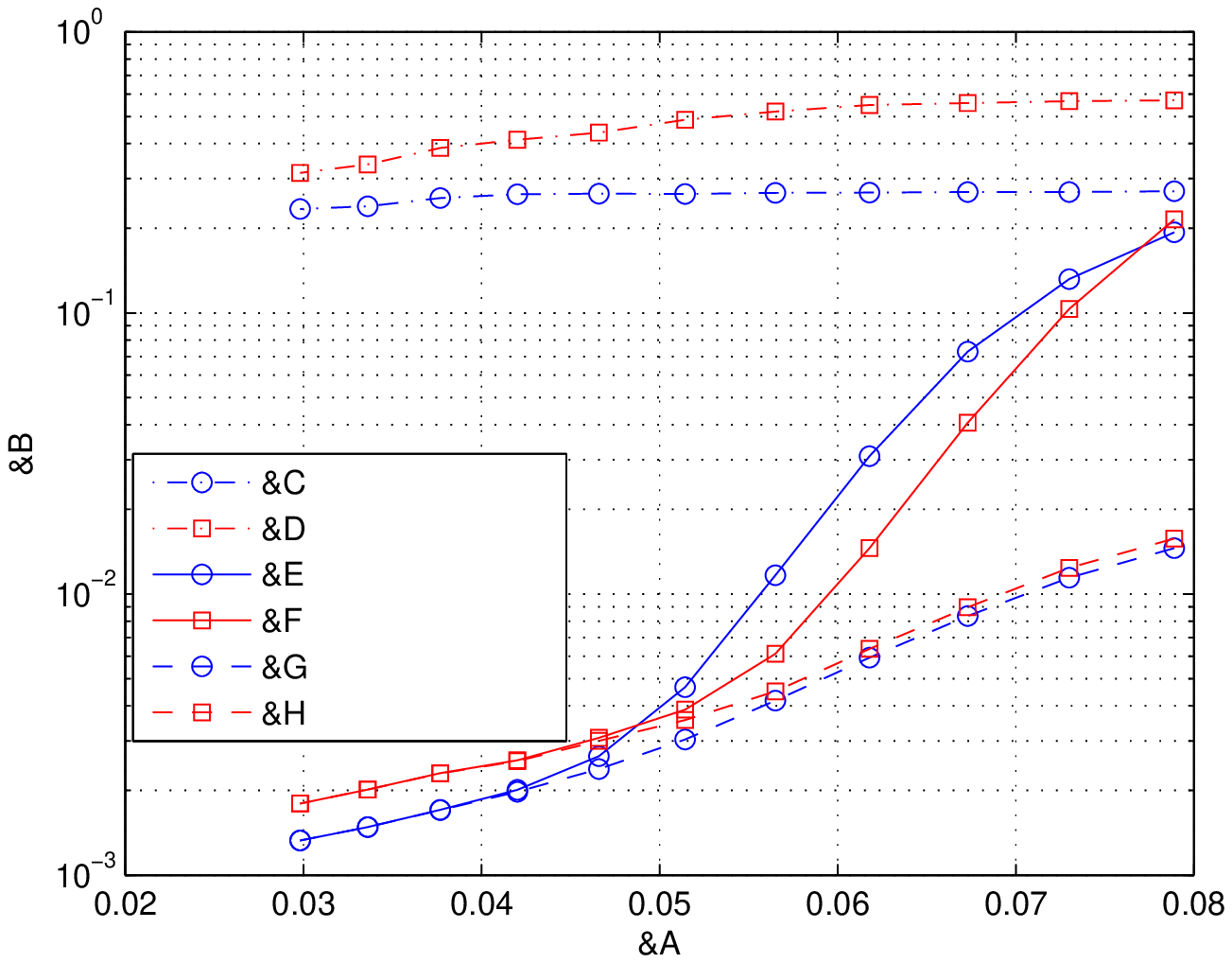}
    \end{center}
    \caption{Execution time of the $(3,6)$-regular random
      LDPC code used on the BSC plotted as a function of
      crossover probability.  The execution time of ADMM is
      compared to that of saturating sum-product BP. Results are averaged
      over 100 randomly generated codes.}
      \label{fig:RandomCode_BSC_time}
\end{figure}

Given the importance of error floor effects in high reliability
applications, and the outcomes of our simulations, we now make some observations.  One point
demonstrated by these experiments, in particular by the simulation of
the Margulis code (and also argued
in~\cite{butlerSiegel:allerton11,butlerSiegel:preprint12}) is that
numerical precision effects can dramatically affect code performance
in the high SNR regime.  From a practical point of view, a real-world
implementation would use fixed precision arithmetic.  Thus,
understanding the behavior of ADMM decoding under finite precision is
extremely important.

A second point made by comparing these codes is that the performance
of an algorithm, e.g., non-saturating BP, can vary dramatically from
code to code (Margulis vs.\ 1057), and the performance of a code can
vary dramatically from algorithm to algorithm (BP vs.\ ADMM).  For
each algorithm we might think about three types of
codes~\cite{yedidia:corresp12}.  The first (type-A) would consist of
codes that do not have any trapping sets, i.e., do not display an
error floor, even for low-precision implementations.  The second
(type-B) would consist of codes whose behavior changes with precision
(e.g., the Margulis code).  The final (type-C) would consist of codes
that have trapping sets even under infinite precision (the
length-$1057$ code may belong to this set).  Under this taxonomy there
are two natural strategies to pursue.  The first is to design codes
that fall in the first class.  This is the approach taken in,
e.g.,~\cite{osullivan:IT04} \cite{fossorier:IT04} \cite{huEtAl:IT05}
\cite{milenkovicEtAl:IT06} \cite{wangDraperYedidia:preprint11}, where
codes of large-girth are sought.  The second is to design improved
algorithms that enlarge the set of codes that fall into the first
class.  This is the approach taken in this paper.  Some advantageous
numerical properties of ADMM are as follows:  First, ADMM has rigorous
convergence guarantees~\cite{boydEtAl:FnT10}. Second, ADMM has
historically be observed to be quite robust to parameter choices and
precision settings~\cite{boydEtAl:FnT10}. This robustness will be
further demonstrated in Sec.~\ref{sec.paramChoices}.  Third, the
``messages'' passed in ADMM (the replica values) are inherently
bounded to the unit interval (since the parity polytope is contained
within the unit hypercube). Due to these numerical properties, we
expect that the ADMM decoder will be a strong competitor to BP in
applications that demand ultra-high reliabilities.

\subsection{Parameter choices}
\label{sec.paramChoices}

In the ADMM decoding algorithm there are a number of parameters that
need to be set.  The first is the stopping tolerance, $\epsilon$, the
second is the penalty parameter, $\mu$, and the third is the maximum
allowable number of iterations, $t_{\max}$.  In our experiments we
explored the sensitivity of algorithm behavior, in particular
word-error-rate and execution-time statistics, as a function of the
settings of these parameters.  In this section we present results that
summarize what we learned.  We report results for the Margulis LDPC
code as used in the AWGN channel.  This is consistent with the
simulations presented in the last subsection.

We first explore the effects of the choice of $\epsilon$ and $\mu$ on
error rate.  We comment that as long as $t_{\max} > 300$ the choice of
$t_{\max}$ does not significantly affect the WER. This effect is also
evidenced in Fig.~\ref{fig:RandomCode_BSC_iter} where the average
number of iterations required in correct decoding events is seen to be
small.  In Fig.~\ref{fig:Margulis_Choose_etol} we plot WER as a
function of the number of bits of stopping tolerance, i.e., $- \log_2
(\epsilon)$. In Fig.~\ref{fig:Margulis_Choose_mu_WER} we plot WER as a
function of $\mu$.  Each data point is based on more than $200$
decoding errors.

\begin{figure}
\psfrag{&A}{\tiny{$\mu = 1$}}
\psfrag{&B}{\tiny{$\mu = 5$}}
\psfrag{&C}{\tiny{$\mu = 7$}}
\psfrag{&D}{\hspace*{-6em} \tiny{$-\log_2$(ending tolerance) $= -\log_2(\epsilon)$}}
\psfrag{&E}{\hspace*{-3em}\tiny{word-error-rate (WER)}}
    \begin{center}
    \includegraphics[width=9.25cm]{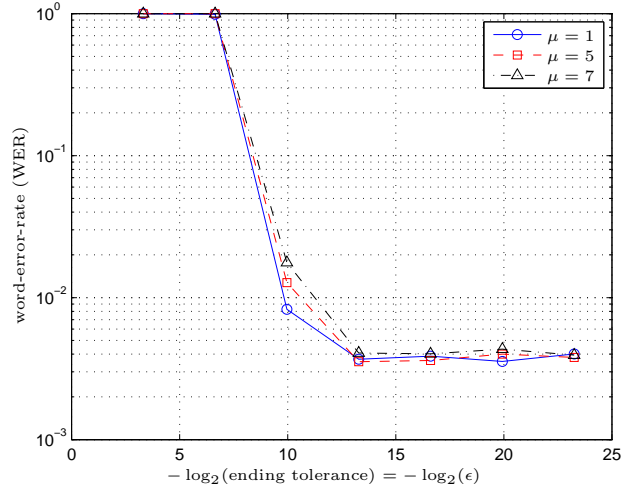}
    \end{center}
    \caption{The effect of the error tolerance $\epsilon$ on word
      error rate (WER). The WER of the Margulis LDPC code for the AWGN
      channel plotted as a function of error tolerance $\epsilon$ for
      three difference penalty parameters $\mu$. The SNR simulated is
      $2.2$ dB. The maximum number of iterations $t_{max}$ is set
      equal to $1000$.}
      \label{fig:Margulis_Choose_etol}
\end{figure}

\begin{figure}
\psfrag{&A}{\tiny{$E_b/N_0=1.9$ dB}}
\psfrag{&B}{\tiny{$E_b/N_0=2.2$ dB}}
\psfrag{&C}{\tiny{$E_b/N_0=2.5$ dB}}
\psfrag{&D}{\hspace*{-3em}\tiny{penalty parameter, $\mu$}}
\psfrag{&E}{\hspace*{-3em}\tiny{word-error-rate (WER)}}
    \begin{center}
    \includegraphics[width=9.25cm]{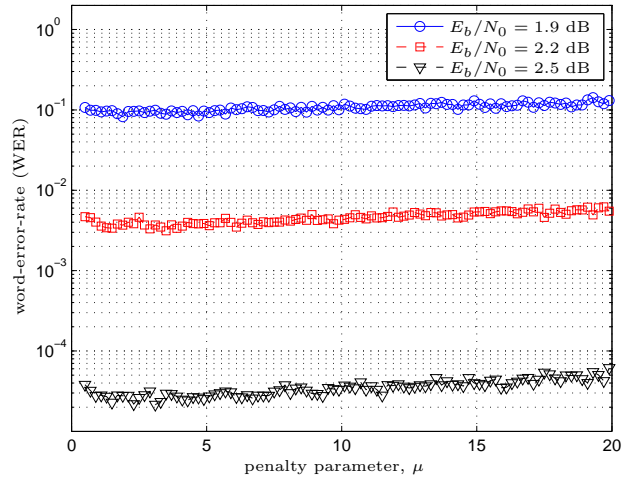}
    \end{center}
    \caption{The effect of the penalty parameter $\mu$ on word error
      rate (WER).  The WER of the Margulis LDPC code for the AWGN
      channel plotted as a function of penalty parameter $\mu$. Error
      tolerance $\epsilon = 10^{-5}$, and maximum number of iterations
      $t_{max} = 1000$.}
      \label{fig:Margulis_Choose_mu_WER}
\end{figure}

From these two figures we conclude that the error performance of the
ADMM decoder depends only weakly on the settings of these two
parameters.  A sufficiently large $\epsilon$ parameter and a moderate
$\mu$ parameter are good enough to achieve the desired error rate.
For instance $\epsilon \geq 10^{-4}$ and $\mu \geq 1$ should do.  This
means that the design engineer has great latitude in the choice of
these parameters and can make, e.g., hardware-compatible choices.
Furthermore, the results on ending tolerance give hints as to the
needed precision of the algorithm.  If algorithmic precision is on the
order of the needed ending tolerance we expect to observe similar
error rates.

We next study the effect of parameter section on average decoding
time.  All time statistics were collected on a 3GHz Intel(R) Core(TM)
2 CPU.  In Fig.~\ref{fig:Margulis_Choose_mu_time} we plot average
decoding time as a function of $\mu$ for three SNRs.  For all three
the ending tolerance is fixed at $\epsilon = 10^{-5}$.  We see some
variability in average decoding time as a function of the choice of
$\mu$. Recalling from Fig.~\ref{fig:Margulis_Choose_mu_WER} that we
should choose $\mu\in [1,10]$ for good WER performance, we conclude
that $\mu \in [2,5]$ is a good choice in term of both error- and
time-performance.

\begin{figure}
\psfrag{&A}{\tiny{$E_b/N_0 = 1.9$ dB}}
\psfrag{&B}{\tiny{$E_b/N_0 = 2.2$ dB}}
\psfrag{&C}{\tiny{$E_b/N_0 = 2.5$ dB}}
\psfrag{&D}{\hspace*{-3em}\tiny{penalty parameter, $\mu$}}
\psfrag{&E}{\hspace*{-3em}\tiny{execution time (sec)}}
    \begin{center}
    \includegraphics[width=9.25cm]{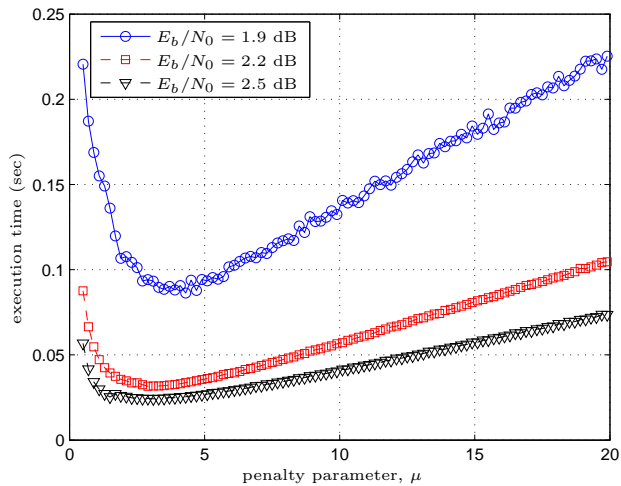}
    \end{center}
    \caption{The effect of the penalty parameter $\mu$ on execution
      time. Average execution time (in seconds) of ADMM decoding the
      Margulis code simulated over the AWGN channel plotted as a
      function of penalty parameter $\mu$ for three distinct SNRs.}
      \label{fig:Margulis_Choose_mu_time}
\end{figure}

\paragraph*{Over-relaxation}  A significant improvement in average decoding time results from
implementing an ``over-relaxed'' version of ADMM.  Over-relaxed ADMM
is discussed in~\cite[Section~3.4.3]{boydEtAl:FnT10} as a method for
improving convergence speed while retaining convergence guarantees.

The over-relaxation parameter $\overrel$ must be in the range $1 \leq
\overrel < 2$.  If $\overrel \geq 2$ convergence guarantees are lost.
We did simulated $\overrel > 2$ and observed an increase in average
decoding time.  In Fig.~\ref{fig:AWGN_Margulis_over_rlx} we plot the
effect on average decoding time of over-relaxed versions of the ADMM
decoder for $1 \leq \overrel \leq 1.9$.  These plots are for the
Margulis code simulated over the AWGN channel at an SNR of $2.8$
dB. We observe that the average decoding time drops by a factor of
about $50\%$ over the range of $\overrel$.  The improvement is roughly
constant across the set of penalty parameters studied: $\mu \in \{1,
3, 5, 7\}$.  By choosing the over-relaxation parameter $\overrel =
1.9$ we can double decoding efficiency without degradation in
error-rate.

While we did not use
over-relaxation in the experiments on parameter choices reported in
Figures~\ref{fig:Margulis_Choose_etol}
through~\ref{fig:Margulis_Choose_mu_time}, we would encourage
interested readers to explore proper settings of $\overrel$ in their
implementations.

\begin{figure}
\psfrag{&A}{\tiny{$\mu = 1$}}
\psfrag{&B}{\tiny{$\mu = 3$}}
\psfrag{&C}{\tiny{$\mu = 5$}}
\psfrag{&D}{\tiny{$\mu = 7$}}
\psfrag{&E}{\hspace*{-4em}\tiny{over-relaxation parameter, $\overrel$}}
\psfrag{&F}{\hspace*{-3em}\tiny{execution time (sec)}}
    \begin{center}
    \includegraphics[width=9.25cm]{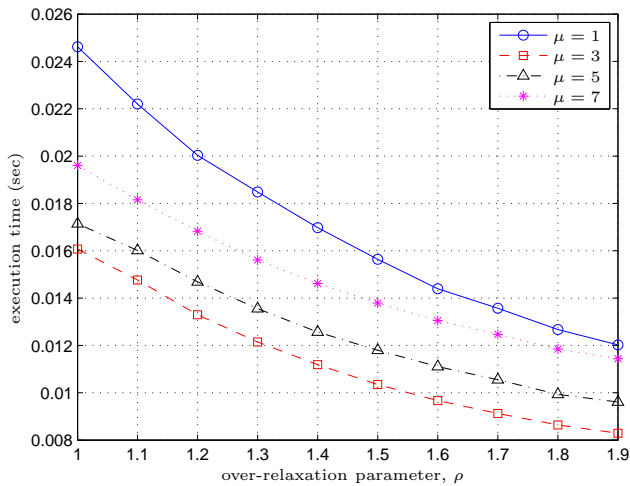}
    \end{center}
    \caption{The effect of the over-relaxation parameter $\rho$ on
      execution time.  Average execution time for ADMM decoding the
      Margulis code simulated over the AWGN channel at $E_b/N_0 = 2.8$
      dB.  Execution time (in seconds) is plotted as a function of
      over-relaxation parameter $\overrel$ for four different penalty
      parameters $\mu \in \{1, 3, 5, 7\}$.}
      \label{fig:AWGN_Margulis_over_rlx}
\end{figure}

\section{Conclusion}
\label{sec.conclusion}

In this paper we apply the ADMM template to the LP decoding problem
introduced in~\cite{feldmanEtAl:IT05}.  A main technical hurdle was
the development of an efficient method of projecting a point onto the
parity polytope.  We accomplished this in two steps.  We first
introduced a new ``two-slice'' representation of points in the parity
polytope.  We then used the representation to attain the projection
via an efficient waterfilling-type algorithm.  

We demonstrate the effectiveness of our decoding technique on the
rate-$0.5$ $[2640,1320]$ ``Margulis'' LDPC code, the rate-$0.77$
$[1057,813]$ LDPC code studied in~\cite{yedidiaWangDraper:IT11}, and
an ensemble of randomly generated $(3,6)$-regular LDPC codes.  We find
that while the decoding behaviors of LP and sum-product BP are similar
in many aspects there are also significant differences.  On one hand,
the waterfall of LP decoding initiates at slightly higher SNR than
that of sum-product BP decoding.  On the other, LP decoding does not
seem to have an error floor.  LP decoding, when implemented in a
distributed, scalable manner using ADMM, is a strong competitor to BP
in the high-SNR regime.  It allows LP decoding to be implemented as a
message-passing algorithm with a simple update schedule that can be
applied to long block-length codes with execution times similar to BP.

An immediate question is how to close the SNR gap that exists between
LP and BP decoding at low SNRs.  In recent work~\cite{liuEtAl:12}, the
authors introduce a penalized LP that increases the relative cost of
pseudocodewords vis-a-vis all-integer solutions.  While the resulting
optimization is (slightly) non-convex the ADMM framework introduced
herein can be applied with negligible (or no) increase in
computational complexity.  This slight modification closes the SNR gap
between BP and LP decoding while retaining the high-SNR behavior of LP
decoding.  Fully understanding the performance of the modified LP
decoders introduced in~\cite{liuEtAl:12} is an important future
direction.  Other interesting directions include application of the
decoder to other classes of codes, study of finite-precision effects,
and generalization of the two-slice representation to a larger class
of easy-to-project upon polytopes.

\section*{Acknowledgments}

The authors would like to thank Matthew Anderson, Brian Butler, Eric
Bach, Alex Dimakis, Paul Siegel, Emre Telatar, Yige Wang, Jonathan
Yedidia, and Dalibor Zelen\'{y} for useful discussions and references.
The authors would also like to note that some of the simulation
results presented in this research would not have been possible without
the resources and the computing assistance of the University of
Wisconsin (UW), Madison, Center For High Throughput Computing (CHTC)
in the Dept.~of Computer Sciences. The CHTC is supported by UW-Madison
and the Wisconsin Alumni Research Foundation, and is an active member
of the Open Science Grid, which is supported by the National Science
Foundation and the U.S. Department of Energy's Office of Science.  This work was partially supported by NSF Grants CCF-1148243 and CCF-1217058 and ONR award N00014-13-1-0129.

\bibliographystyle{IEEE}

\appendix
\section{Dual Subgradient Ascent}
\label{sec.dual-subgradient-ascent}
We note that ADMM is not the only method to decompose 
LP decoding problem. Here we present another decomposition method
using an (un-augmented) Lagrangian. This method is not 
as efficient as the ADMM decomposition. We hope to share 
this algorithm for readers interested in developing other
decomposition methods.

First we construct an un-augmented Lagrangian
\begin{align*}
L_0(\bfx, \bfz ,\bflambda) := \ & \bfgamma^T \bfx + \sum_{j \in
  \mathcal{J}} \bflambda_j^T ( \bfP_j \bfx - \bfz_j )
\end{align*}
the dual subgradient ascent method consists of the iterations:
\begin{align*}
\bfx^{k+1} & := \argmin_{\bfx \in \mathcal{X}} L_0(\bfx,\bfz^k, \bflambda^k) \\
\bfz^{k+1} & := \argmin_{\bfz \in \mathcal{Z}}  L_0(\bfx^{k}, \bfz, \bflambda^k ) \\
\bflambda_j^{k+1} & := \bflambda_j^k + \mu  \left( \bfP_j \bfx^{k+1} -\bfz_j^{k+1}  \right).
\end{align*}
Note here that the $\bfx$ and $\bfz$ updates are computed with respect to
the $k$ iterates of the other variables, and can be done completely in parallel.

The $\bfx$-update corresponds to solving the very simple LP:
\begin{equation*}
\begin{array}{ll}
  \mbox{minimize} & \left(\mbf{\gamma} +   \sum_{j \in
  \mathcal{J}}\bfP_j^T
\bflambda_j^k \right)^T \bfx \\
\mbox{subject to} & \bfx \in [0,1]^N.
\end{array}
\end{equation*}

This results in the assignment:
\begin{align*}
\bfx^{k+1} & = \theta \left(-\mbf{\gamma} -   \sum_{j \in
  \mathcal{J}}\bfP_j^T
\bflambda_j^k \right)
\end{align*}
where 
\[
\theta(t) = \begin{cases} 1 & t>0 \\
0 & t\leq 0\end{cases}
\]
is the Heaviside function.

For the $\bfz$-update, we have to solve the following LP for each $j\in\mathcal{J}$:
\begin{equation}\label{eq:lp-ppd}
\begin{array}{ll}
  \mbox{maximize} & {\bflambda_j^k}^T \bfz_j \\
\mbox{subject to} & \bfz_j \in \PP_d.
\end{array}
\end{equation}
Maximizing a linear function over the parity polytope can be performed
in linear time. First, note that the optimal solution necessarily
occurs at a vertex, which is a binary vector with an even hamming weight. Let $r$ be the number of positive components in the
cost vector $\bflambda_j^k$.  If $r$ is even, the vector $\bfv \in
\PP_d$ which is equal to $1$ where $\bflambda_j^k$ is positive and
zero elsewhere is a solution of~\eq{lp-ppd}, as making any additional
components nonzero decreases the cost as does making any of the
components equal to $1$ smaller.  If $r$ is odd, we only need to
compare the cost of the vector equal to $1$ in the $r-1$ largest
components and zero elsewhere to the cost of the vector equal to $1$
in the $r+1$ largest components and equal to zero elsewhere.  

The procedure to solve~\eq{lp-ppd} is summarized in
Algorithm~\ref{Algorithm:lp-ppd}.  Note that finding the smallest
positive element and largest nonnegative element can be done in linear
time.  Hence, the complexity of Algorithm~\ref{Algorithm:lp-ppd} is $O(d)$.

While this subgradient ascent method is quite simple, it is requires
significantly more iterations than the ADMM method, and thus we did not
pursue this any further.

\begin{algorithm}
\caption{Given a binary $d$-dimensional vector $\mbf{c}$, maximize
  $\mbf{c}^T\bfz$ subject to $\bfz\in\PP_d$.}
\label{Algorithm:lp-ppd}
\begin{algorithmic}[1]

\STATE Let $r$ be the number of positive elements in $\mbf{c}$.
\IF{$r$ is even}
\STATE {\bf Return} $\bfz^*$ where $z_i^*=1$ if $c_i>0$ and $z_i^*=0$ otherwise.
\ELSE
\STATE Find index $i_p$ of the smallest positive element of $\mathbf{c}$.
\STATE Find index $i_n$ of the largest non-positive element of
$\mathbf{c}$.
\IF{$c_{i_p}>c_{i_n}$}
\STATE {\bf Return}  $\bfz^*$ where $\bfz_i^*=1$ if $c_i>0$, $z_{i_n}^*=1$,
and $z_i^*=0$ otherwise.
\ELSE
\STATE  {\bf Return}  $\bfz^*$ where $\bfz_i^*=1$ if $c_i>0$ and $i \neq i_p$, $z_{i_p}^*=0$,
and $z_i^*=0$ for all other $i$.
\ENDIF
\ENDIF
\end{algorithmic}
\end{algorithm}

\end{document}